\DeclareMathAlphabet{\mathcal}{OMS}{cmsy}{m}{n}
\DeclareMathAlphabet{\mathpzc}{OT1}{pzc}{m}{it}
\newcommand{\basicdp}{\textsc{BasicDP}\xspace}
\newcommand{\opt}{\textbf{OPT} }
 \newcommand{\todo}{\color{red} Comment: }  
\newcommand{\gm}{\textsf{GM}\xspace}
\newcommand{\wm}{\textsf{WM}\xspace}
\newcommand{\UM}{\textsf{UM}\xspace}  
\newcommand{\E}{\textsf{EM}\xspace}
\newcommand{\eat}[1]{}
\newcommand{\mech}{\mathcal{P}}
\newcommand{\para}[1]{\smallskip \noindent {\bf #1}}
\newcommand{\etal}{{\em et al.\xspace}}
\newcommand{\F}{\textsf{F}\xspace}
\newcommand{\0}{\ensuremath{\mathbb{L}_0}} 
\newcommand{\Ld}[1]{\ensuremath{\mathbb{L}_{0,#1}}} 
\newcommand{\1}{\ensuremath{\mathbb{L}_1}}
\newcommand{\2}{\ensuremath{\mathbb{L}_2}}
\newtheorem{definition}{Definition}
\newtheorem{theorem}{Theorem}
\newtheorem{example}{Example}
\newtheorem{lemma}{Lemma}
\newtheorem{fact}{Fact}
\DeclareMathOperator{\trace}{trace}
\newcommand{\yes}{Y}
\newcommand{\no}{N}
\setlist{nosep}
\title{Constrained Differential Privacy for Count Data}
\author{%
{Graham Cormode{\small $~^{\#1}$}, Tejas Kulkarni{\small $~^{*1}$},Divesh Srivastava{\small $~^{\#2}$} }%
\vspace{1.6mm}\\
\fontsize{10}{10}\selectfont\itshape

\fontsize{9}{9}\selectfont\ttfamily\upshape
%
$^{\#1}$\,g.cormode@warwick.ac.uk\\
$^{*1}$\,tejasvijaykulkarni@gmail.com%
\vspace{1.2mm}\\
\fontsize{10}{10}\selectfont\rmfamily\itshape
$^{\#1},^{*1}$\,The University Of Warwick, UK\\
\fontsize{9}{9}\selectfont\ttfamily\upshape
$^{\#2}$\,divesh@research.att.com \\
\fontsize{10}{10}\selectfont\rmfamily\itshape
$^{\#2}$\,AT\&T Labs-Research, USA\\
}
\begin{document}
\maketitle
\begin{abstract}
Concern about how to aggregate sensitive user data without compromising
individual privacy is a major barrier to greater availability of
data.
The model of differential privacy has emerged as an accepted model to
release sensitive information while giving a statistical guarantee for
privacy.
Many different algorithms are possible to address different target
functions.
We focus on the core problem of count queries, and seek to design
{\em mechanisms} to release data associated with a group of $n$
individuals. 	
 
Prior work has focused on designing mechanisms by raw optimization of a
loss function, without regard to the consequences on the results. 
This can leads to mechanisms with undesirable properties, such as
never reporting some outputs (gaps), and overreporting others
(spikes). 
We tame these pathological behaviors by introducing a set of desirable properties that
mechanisms can obey.
Any combination of these can be satisfied by solving a linear program (LP)
which minimizes a cost function, with constraints enforcing the properties.
We focus on a particular cost function, and 
provide explicit constructions that are optimal for certain
combinations of properties, and show a closed form for their cost.
In the end, there are only a handful of distinct optimal mechanisms to choose
between: one is the well-known (truncated) geometric mechanism; the
second a novel mechanism that we introduce here, and the remainder are
found as the solution to particular LPs.
These all avoid the bad behaviors we identify. 
We demonstrate in a set of experiments on real and
synthetic data which is preferable in practice, for different
combinations of data distributions, constraints, and privacy parameters. 
\end{abstract}
\section{Introduction}
\allowdisplaybreaks

There has been considerable progress on the problem of how to release
sensitive information with privacy guarantees in recent years.
Various formulations have been proposed, with the model of
differential privacy emerging as the most popular and
robust~\cite{Dwork06}.
Differential privacy (DP) lays down rules on the likelihood on seeing
particular outputs given related inputs.
Many different algorithms have been proposed to meet this guarantee,
based on different objectives and input types~\cite{Dwork:Roth:14}. 
The resulting descriptions of output probabilities for different
inputs are referred to as mechanisms, such as the Laplace mechanism,
Geometric mechanism and Exponential mechanism described in more detail
subsequently.

In this paper, we focus on count queries,
a fundamental problem in private data release that underpins many
applications, from basic statistics of a dataset to complex spatial
and graphical distributions.
Count queries are needed to materialize frequency distributions,
instantiate statistical models, and as the basis of SQL COUNT *
queries.
Counts can be applied to arbitrary groups, and based on complex
predicates; hence they represent a very general tool. 
Abstracting, we have a group of $n$ individuals, who each hold a private bit
(encoding, for example, whether or not they possess a particular
sensitive characteristic).
The aim is to release information about the sum of the bits, while
meeting the stringent differential privacy guarantee.
The usual model assumes the existence of a trusted aggregator, who
receives the individual bits, and who aims to release a noisy
representation of their sum.
Since the value of the true answer is in $\{0\ldots n\}$, it is
natural to restrict the output of the mechanism to this range also, to
ensure downstream compatibility with subsequent data analysis expecting
integer counts in this range.
If we analyze how existing approaches to differential privacy handle
this case, we find there are weaknesses. 
We consider the most relevant example, mechanisms obtained via a linear programming framework.

\begin{figure}[t]
\centering
\includegraphics[width=0.5\textwidth]{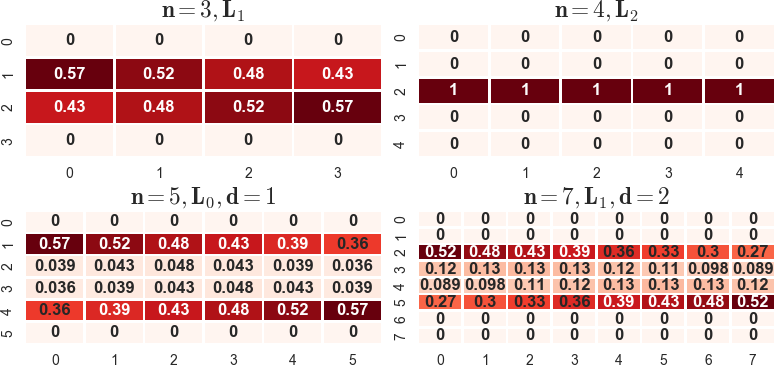}
  \caption{Heatmaps of unconstrained mechanisms for $\alpha=0.62$}
  \label{fig:anomalies}
\end{figure}

\para{Linear Programming Framework~\cite{Ghosh:2009}.}
Ghosh \etal\ considered count queries and proved powerful
theorems about utility-optimal mechanisms.
They showed how to design mechanisms for count queries which minimize a
loss function, via linear programming.
These mechanisms specify, for each possible input, a probability
distribution over allowable outputs. 
However, for common objectives, including to minimize 
the expected absolute error (denoted $\1$) and squared error ($\2$), 
we observed that the ``optimal'' mechanisms have some anomalous behavior, such
as never reporting some values. 

Figure~\ref{fig:anomalies} gives some examples of this phenomenon in
action.
We show four optimal mechanisms for different input sizes ($n$), under
a privacy guarantee controlled by a parameter $\alpha$ (explained
later, and set to a fixed value here).
Each column gives the probability distribution over the outputs in the
range $0$ to $n$, for a given input count (also $0$ to $n$).
The case of optimizing the squared error ($\2$) is most striking: the
``optimal'' thing to do in this case is to ignore the input and always report
`2'!
But other cases are also problematic: all these optimal mechanisms
never report some outputs (gaps), and disproportionately report some
others (spikes).
For example, minimizing the absolute error for $n=7$ has a chance of
reporting the values 2 or 5 with at least $0.7$ probability, regardless
of the input value.
Similarly, if we try to minimize the probability of reporting an
answer that is more than 1 step away from the true input (denoted as
$\0$ with $d=1$), there is an over 90\% chance of reporting 1 or 4.


\eat{
In our setting, the mechanism found corresponds to the discrete
geometric distribution, where negative outputs are rounded to zero,
and outputs above $n$ are rounded down to $n$. 
We observe that this distribution consequently has spikes at the
extreme values, which tend to distort the true distribution quite
dramatically,  
as the next example shows. 

\begin{example}
  \label{eg:gm}
Consider the case of $n=2$, corresponding to a group of two
individuals, with a moderate setting of the privacy
parameter $\alpha$\footnote{Formally, we set the 
  privacy parameter $\alpha = \frac9{10}$, explained in Definition~\ref{def:dp}.}.
For an input of $1$ (i.e. one user has a 1, and the other has a 0), 
we obtain
that the probability of seeing an output of 0 is 
$\approx 0.47$, and the same for an output of 2. 
Meanwhile,
the probably of reporting the true output is $\approx 0.05$
--- in other words, the chance of seeing the
true answer is eighteen times lower than seeing an incorrect answer.
Meanwhile,
if the input is 0, then output 0 is returned with probability 
$\approx 0.53$:
so the mechanism is much more likely
to report the true answer when it is 0 than when it is 1. 
As we increase the privacy parameter $\alpha$ closer to 1 (more
privacy), the probability of outputs other than 0 and $n$ approaches 0. 
\end{example}
}

\begin{figure}[t]
\centering
\includegraphics[width=0.5\textwidth]{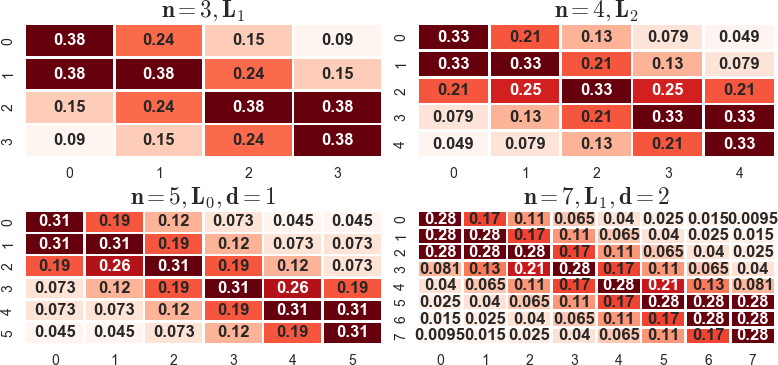}
  \caption{Heatmaps of constrained mechanisms for $\alpha=0.62$ }
  \label{fig:noanomalies}
\end{figure}

\smallskip
Clearly, such results are counter-intuitive and show that blind
optimization of simple objective functions leads to unexpected and
undesirable outcomes.  
To address this, we initiate the study of {\em constrained mechanism design}:
requiring mechanisms to satisfy additional properties ensuring desired
structure in obtained mechanism and avoiding these pathologies.
For example, we define the notion of {\em fairness}, which requires
that the probability of reporting the true input is the same for all
inputs; and {\em weak honesty}, where we require that the probability
of reporting the true input is at least uniform (i.e. at least
$\frac{1}{n+1}$).
These both ultimatel entail that every output is reported with a non-zero probability.
We also consider various monotonicity properties, which preclude big
spikes in probability for responses that are far from the truth. 
In total, we describe seven natural properties that one could demand
of a mechanism.
Our first result is to show how to extend the linear programming framework to incorporate
these properties and eliminate pathological outcomes.

Figure~\ref{fig:noanomalies} shows the heatmap of constrained mechanisms
satisfying all properties.
The anomalies (spikes and gaps) seen in Figure~\ref{fig:anomalies} are
now eliminated.
Recall that optimizing in the unconstrained \2 case returns a trivial
solution that outputs $2$ irrespective of input.
Now, the probability mass in the corresponding constrained mechanism
is more distributed and with probability at least $\frac23$, the mechanism
outputs a value differing from the true answer by at most $1$ for all
inputs.
Similar observations can be made in the other instances. 
We go on to perform a detailed study of constrained mechanism design for count
queries, and show some surprising outcomes:

\eat{
\para{\textbullet~Fully constrained mechanisms minimizing $\0,\1,\2$ are similar}
The mechanisms for satisfying all constraints irrespective of what objective function they are minimizing are similar. This means analyzing properties on just one of the loss functions should give us an approximate idea of utility offered on others. Hence, we focus most of our attention on $\0$ loss function.
}
\para{\textbullet~No blow up in number of mechanisms for $\0$.}
  Given 7 different properties, there are $2^7 = 128$ different
  combinations that could be requested.  Does this mean that there are
  over a hundred distinct constrained mechanisms?
  We show that this is not the case: there are at most four different
  behaviours that can be observed.
  Two behaviors correspond to explicit constructions of mechanisms:
  the (truncated)  geometric mechanism (\gm) proposed in \cite{Ghosh:2009}, which corresponds to the unconstrained optimal solution; and a new ``explicit fair
  mechanism'' (\E) which simultaneously achieves all the properties
  that we introduce.
  In between are two mechanisms which achieve variations of the weak
  honesty property above, which are found by solving an optimization
  problem.
  
\para{\textbullet~No significant loss in utility.}
  The Geometric mechanism obtains the minimal value of the $\0$ loss
  function, for which we give a closed form in terms of the privacy
  parameter $\alpha$.
  However, our most constrained mechanism (the explicit fair
  mechanism,  \E) is only incrementally more expensive: the loss function
  value is higher by a factor of  approximately $1 + \frac{1}{n}$,
  which becomes negligible for even moderate $n$.
  The costs of the other constrained mechanisms are sandwiched in
  between.

\smallskip
Consequently, we conclude that the addition of constraints provides
significant structure to the space of mechanism design, and comes at
very low cost.
Given these observations, one may wonder whether there is any material
difference in behavior between the constrained and unconstrained
mechanisms?
This is indeed the case. 
For example, Figure~\ref{fig:heatmaps} shows a quantitative
difference between \gm and \E for $n=4$ (chosen to make the
results easy to view).
The heatmap shows that \gm concentrates the probability mass on the
two extreme outputs, 0 and $n$, while \E achieves a more balanced
distribution, closer to the leading diagonal (corresponding to a
truthful mechanism).
If we assume a uniform input distribution, \E reports the true input
with probability 0.224, while \gm (which maximizes this quantity)
achieves 0.238, only marginally higher but with a high skew. 
A third mechanism with the weak honesty property, \wm, sits between
the two.

Our experiments further study the implications of using
constrained mechanisms, and compare their empirical behavior on a
mixture of real and synthetic data. 
Differences are most apparent for moderate values of $n$: as $n$
becomes very large, these ``end effects'' become less
significant, and off-the-shelf mechanisms do a good enough job.
Thus, we spend most of our effort studying groups corresponding to a
moderate number of individuals, up to tens.
Arguably, such small groups are most in need of protection, since they
have only a few participants: there is reduced safety in numbers for them.

\eat{  
---

As ever greater amounts of information can be gathered by service
providers about their users, there is increased focus on how this collection can be done
in a way that is compatible with preserving the privacy of the
information owners.
The research community has developed techniques that support this
privacy-compatible data collection.
Methods based on encryption ensure that the data is not revealed
while in transit, but may be decoded in the clear by the recipient.
More advanced methods using cryptography allow aggregates (typically sums)
of values from multiple data owners to be computed under encryption,
and only the final answer is revealed following decryption.
These methods tend to involve intensive computation, multiple
intermediate third parties, and multiple rounds of interaction.
Privacy-preserving techniques allow the results of computations to be
released with privacy guarantees.
These do not necessarily require the use of cryptography, but may
assume the existence of a trusted third party to perform data
aggregation. 
The model of differential privacy is most popular here, and provides a
precise statistical guarantee over the distribution of output values.

The general model for differential privacy is to compute a function of
input data gathered from a population of individuals.
The output of the function is subject to a random statistical perturbation,
in order to provide a guarantee over the likelihood of seeing
different outputs. 
Our focus in this work is on the case when 
the size of the input population is relatively small (say, at most tens).
This can capture, for example, a group of co-workers who pool their
data together for downstream analysis.
Here we seek to design a {\em mechanism}, which describes the
probability distribution over possible outputs given inputs. 
In this setting we need extra care to engineer  mechanisms for
data release, since even small noise can overwhelm the signal from the
group.
For most work in privacy, it is common to assume that the group is
large enough for the signal to override the noise, or that data
from more individuals can be added to the mix.  This is not the case
in our setting, so we seek to design optimal mechanisms. 

The leading contribution on small group privacy is due to Ghosh
\etal \cite{Ghosh:2009}. 
This work proves the powerful result that for the core query of
answering a selectivity query (how many people in the group satisfy a
particular property),  mechanisms that minimize a loss function
can be found which are derived from a single base mechanism, called
the (truncated) geometric mechanism.
However, we observe that it is not always to appropriate to pick a
mechanism which blindly minimizes an loss function.
When studying the properties of a small group, it is important to
ensure that the results satisfy various interpretatibility and
consistency properties.
These are not guaranteed by the unconstrained minimal mechanism.
Instead, we develop a set of constraints to capture these
requirements, and show how to build mechanisms which obey them, either
explicitly, or via {\em constrained optimization}.
Our analytical and empirical results demonstrate that the solution to
the constrained optimization does not compromise much on the core
objective, while producing mechanisms with the required properties. 
\eat{
Randomized Response is a way of gathering sensitive data in surveys.
In its simplest guise, someone is asked a sensitive question --
does the subject engage in some certain prohibited activity, say.
Rather than answer directly, the subject answers truthfully with
probability $p$, otherwise they negate their answer.
This gives uncertainty over what their true answer is; however, by
combining the answers of a large population and making a suitable
correction, an accurate estimate can be made for the global prevalence
of the activity.}

\para{Contributions.}
In this paper we focus on 
providing constrained Differential Privacy for small groups.
In this setting, we seek to design mechanisms to release statistics
drawn from a small group of individuals under the differential privacy
guarantee.
This captures a number of different scenarios:
\begin{itemize}
\item
  When one entity is aggregating the information from a small group --
  say, activities of members of a household, or users of a shared
  device -- for downstream processing.
\item
  When there is a natural trusted party that receives input from a
  set of individuals, e.g. a data collection point in an office.
\item
  When there is only a small number of users in total, and we want to
  find an optimal differentially private mechanism that maximizes the
  utility of the private output.
\end{itemize}

We proceed as follows.
}

\para{Outline.}
First, we discuss prior work and introduce our model and define useful
notions for differential privacy in Section~\ref{sec:prelims}.
After defining the Linear Programming framework (Section~\ref{sec:unconstrained}),
in Section~\ref{sec:smallgroup}, we present constraints that can be added to avoid degeneracy. 
We show that additional properties which constrain the output can be
obtained efficiently via solving a constrained optimization problem. 
We also propose an explicit construction of a mechanism which provably
achieves all our proposed properties, and analyze the additional ``cost'' in
terms of various measures of accuracy.
In Section~\ref{sec:expts}, we report on intrinsic
experiments to  study the properties of our mechanisms in a data
agnostic way, and complement this with extrinsic experiments to test
the accuracy on real data.

\section{Preliminaries}
\label{sec:prelims}

\subsection{Model And Definitions}
\label{sec:definitions}  	

Our model captures a group of $n$ participants, each of whom has some private
information which is encoded as a single bit. 
They share their information with a trusted aggregator, whose aim is to
release information about the sum of the values while protecting the privacy of each participant.
Although simple, this question is at the heart of all complex analysis
and modelling, and demands a comprehensive solution. 
We simplify the description of the input to just record the true sum
of values $j$, so we have $0 \leq j \leq n$. 
This captures the case of a count-query over a table 
$D$.
Our goal is to design a {\em randomized mechanism} that, given input
$j$ produces output $i$, subject to certain constraints.  

\begin{definition}[Randomized Mechanism]
\label{def:mechanism}
  A randomized me\-ch\-anism $\mathcal{M}$ is a mapping
$\mathcal{M}: D\Rightarrow R$, where $R = \{0,..,n\} = [n]$
  is the range of the mechanism.
We write $\Pr_{\mathcal{M}}[i|j]$ for the conditional probability
that the output $\mathcal{M}(j)$ (on input $j \in D$) is $i \in R$.
We will drop the subscript $_{\mathcal{M}}$ in context. 
\end{definition}

Our mechanism maps inputs in the range $0$ to $n$ to outputs in the
same range. 
While one could allow a different set of outputs,
it is most natural to restrict to this range.
Consider for example, a downstream analysis step which expects counts
to be integers in the range $[n]$: we should ensure that this
expectation is met by the result of applying mechanisms. 
Rather than attempt to map different outputs to this range, it is more direct
to build mechanisms that cover this output set. 
It is therefore natural to represent $\mathcal{M}$ as an $(n+1) \times
(n+1)$ square
matrix $\mathcal{P}$, where $\mathcal{P}_{i,j} = \Pr[\mathcal{M}(j) =
  i] = \Pr_{\mathcal{M}}[i|j]$.
For brevity, we  abbreviate this probability to $\Pr[i|j]$.
Note that therefore $\mathcal{P}$ is a {\em column stochastic matrix}:
the entries in each column can be interpreted as probabilities, and
sum to 1.

\para{\bf Privacy of a mechanism.}
Differential privacy imposes constraints on the probabilities in our
mechanism.
Specifically, it bounds the ratio of probabilities of seeing the same
output for {\em neighboring inputs}~\cite{Dwork06}.
In our setting, the notion of neighboring is simply that they differ
by (at most) one, which happens when an individual changes their response. 
Hence, applying the definition, we obtain

\begin{definition}[Differentially Private Mechanisms]
\label{def:dp}
  \mbox{Mechanism $\mathcal{M}$ is $\alpha$-differentially private for $\alpha
    \in [0,1]$ if}
\[ \textstyle
 \forall i, j : \alpha \leq
 \frac{\Pr[i|j]}{\Pr[i|j+1]}\leq \frac{1}{\alpha}.
\]
\end{definition}%
\noindent
Here $\alpha$ close to 1 provides a stronger notion of privacy and a
tighter constraint on the probabilities, while $\alpha$ close to zero
relaxes these constraints. 
It is common in differential privacy to write $\alpha =
\exp(-\epsilon) \approx 1-\epsilon$, for some $\epsilon > 0$.
We adopt the $\alpha$ notation for conciseness, and translate results
in terms of $\epsilon$-differential privacy when appropriate.
We say 
a DP constraint is {\em tight} if the relevant
inequality is met with equality. 


\para{\bf Utility of a mechanism.}
The true test of the utility of a mechanism is the accuracy with
which it allows queries to be answered over real data.
However, we aim to design mechanisms prior to their application to
data, and so we seek a suitable function to evaluate their quality.
Since there are many column stochastic matrices that satisfy DP, the problem of finding a mechanism that provides the maximal utility can be framed as an optimization problem. Specifically, we can encode our notion of utility  as a penalty function, where we seek to penalize the mechanism for reporting results that are far from the true  answer.  
\begin{definition}[Objective function value]
\label{def:objective}
  We define the objective function $O_{p,\oplus}(\mech)$ of a mechanism $\mech$ as:

\[
O_{p,\oplus}(\mech) = \oplus_j \sum_{i} w_{j} \Pr[i|j] |i-j|^{p}
\]
\noindent\mbox{where $\oplus$ is an operator like $\sum$ or
  $\operatorname{max}$, and $\sum_{j} w_j=1$.}
\end{definition}

Observe that the weights $w_j$ can be thought of as a prior
distribution on the input values $j$.
Then $O_{p,\sum}(\mech)$ gives the expected error of the mechanism, when
taking its output as the true answer, and $|i-j|^{p}$ penalizes the
extent by which the output was incorrect.
When not otherwise stated, we take $w_j = \frac{1}{n+1}$, i.e. a
uniform prior over the inputs.
Common choices for $p$ in the definition would be $p=2$,
corresponding to a squared error ($\2$ norm), $p=1$, corresponding to an absolute
error ($\1$ norm), and $p=0$, corresponding to the probability of any wrong
answer ($\0$ norm).
In what follows, we devote most of our attention to the case \0.
We argue that this is an important case:
(i) maximizing the probability of reporting the truth is a natural
objective in mechanism design; we aim to ensure that the reported
answer is the maximum likelihood estimator (MLE) for the true answer,
for use in downstream processing
(ii) due to the differential privacy constraints, maximizing the probability of the
true answer has the additional effect of making nearby answers likely,
as our experiments validate.
(iii) our internal study shows that objectives like \1 and \2
often give pathological results, as seen in
Figure~\ref{fig:anomalies}.
Working with \0 gives more robust behavior.
We therefore initiate the study of constrained mechanism design for \0, and give some
initial results for other objectives.
It is convenient to apply a rescaling of the loss function by a factor of
$\frac{n+1}{n}$: this sets the cost of a trivial mechanism to 1
(Definition \ref{def:uniform}).
We refer to this rescaled cost as \0, as this corresponds to a scaled
version of
$O_{0,\sum}$  that sums the probabilities of a wrong answer, and~so
\begin{equation}
  \0(\mech) = \frac{n+1}{n}-\frac{\trace{\mech}}{n}.
  \label{eq:l0cost}
\end{equation}

Abusing notation slightly, we also define the 
objective function, $\Ld{d} = \frac{n+1}{n} \sum_{i,j : |i-j| \geq d}^{n} w_j \Pr[i|j]$ which
computes a rescaled sum of probabilities more than $d$ steps off the main
diagonal, so that $\0 = {\0}_{,0}$. 

%

\subsection{Prior Work and Existing Mechanisms}
\label{sec:prior}
We now review the most relevant existing approaches that
apply in our setting. The model of differential privacy
\cite{dinur:2003,Dwork06,dwork:2004} has received a lot
of attention in the decade since it was christened, 
from a variety of communities including
systems~\cite{shi:2011}, machine learning and signal processing~\cite{sarwate:2013} and data management~\cite{yang:2012}.
For a more thorough overview of the area, there are several detailed surveys~\cite{hardness11,Dwork:Roth:14,VLDB16}.

The most relevant work to our interests is due to Ghosh \etal~\cite{Ghosh:2009} who study the problem of designing 
mechanisms optimizing for expected utility.
Their contributions are to introduce a linear programming formulation
of the problem, and to show that a certain mechanism (denoted \gm) emerges as the
basis of other optimal mechanisms, discussed in more detail below. 
Gupte and Sunararajan proved a similar universality result for ``minimax'' loss
functions and uniform weights $w_j$~\cite{Gupte:2010}.
They provided a simple test for when a given mechanism can be obtained
by first applying \gm and then modifying the result (e.g. by randomly
sampling from a distribution indexed by the observed output from \gm). 
Subsequent work by Brenner and Nissim shows that such ``universally
optimal'' mechanisms are not possible in general for other
computations, such as computing histograms~\cite{Brenner:Nissim:10}. 
Other relevant work studies special cases of differential
privacy.
An important variant is the model of {\em local differential
  privacy} (LDP), where users first perturb their input before passing it to
an (untrusted) aggregator. 
That is, each user applies a mechanism for a group of size $n=1$. 
LDP is used  in Google's Chrome via the
RAPPOR tool to collect browser and system statistics~\cite{rappor},
and in Apple's iOS 10 to collect app usage statistics~\cite{apple}.

\noindent
\mbox{The most relevant existing approaches to us are the following:}

\para{Mechanisms from coin-tossing: Randomized Response.} 
There are many variations of Randomized Response \cite{Chaudhuri:1988}.
A canonical form for the case $n=1$ 
has the user report the true value of their input bit
with probability $p > \frac12$, but report the negation of their input with
probability $1-p$.
%
%
It is immediate that this procedure achieves 
$\alpha$-differential privacy for $\alpha=\frac{1-p}{p}$ (see Definition~\ref{def:dp}). 
%
Due to its simplicity and privacy guarantees, randomized response has
recently found use in a number of systems, such as RAPPOR~\cite{rappor}, which
applies randomized response in conjunction with a Bloom filter to accommodate many possible elements. 
Geng \etal\ in \cite{GKOP:15} give a natural extension of $1$ bit
randomized response to $n$-ary data, which reports its input with
probability $p$, else another output is chosen uniformly.
This gives low utility for count queries. 
\eat{
\para{Staircase Mechanism}
\[ \begin{pmatrix}
p & \frac{1-p}{n} & \frac{1-p}{n} & \cdots &\frac{1-p}{n} \\ 
\frac{1-p}{n} & p & \frac{1-p}{n}& \cdots &\frac{1-p}{n} \\ 
\frac{1-p}{n} & \frac{1-p}{n} & p &\cdots &\frac{1-p}{n} \\
\cdots & \cdots& \cdots& \ddots \\ 
\frac{1-p}{n} & \frac{1-p}{n} & \frac{1-p}{n} &\cdots & p
\end{pmatrix} 
 \]
}

\para{Defining sampling probabilities: Exponential Mechanism.}
McSherry and Talwar \cite{McSherry:2007} proposed the Exponential
Mechanism as a generic approach to designing mechanisms. 
Let $\mathcal{D}$ be the domain of input dataset and $\mathcal{R}$ the range of perturbed responses.
The crux of the exponential mechanism is in designing a {\em quality
  function} $Q:\mathcal{D} \times \mathcal{R} \Rightarrow \mathbb{R}$
so that $Q(d,r)$ measures the desirability of providing output $r$ for
input $d$.
The mechanism is then defined by setting
\begin{equation}
  \textstyle
  \Pr[ r \in \mathcal{R} | d] =  \exp{\left(\frac{\epsilon Q(d,r)}{2s}\right)}
  \Big/ \sum_{r'
      \in \mathcal{R}} \exp{\left(\frac{\epsilon Q(d,r')}{2s}\right)}
  \label{eq:em}
\end{equation}

\noindent
where $s$ captures the amount by which changing an individual's input can alter
the output of $Q$ in the worst case.
It is proved that this mechanism obtains at least
$\exp(-\epsilon)$-differential privacy. 
However, although we can use $Q$ to indicate that some outputs are
more preferred, it is not possible to modify a given $Q$ to directly enforce the
properties that we desire, such as ensuring that the probability of
returning the true output is at least as good as that of a uniform
distribution (``weak honesty'', \eqref{eq:weakhonest}).


\para{Rounding numeric outputs: Laplace and Geometric Mechanisms.}
\label{sec:gm}
Perhaps the best known differentially private mechanism is the Laplace
mechanism, which operates by adding random noise to the true answer
from an appropriately scaled Laplace distribution (a continuous
exponential distribution symmetric around zero).
Note that in order to fit our definition of a mechanism
(Definition~\ref{def:mechanism}), it will be necessary to round and
truncate the output of the mechanism to the range $[n]$.
Here, the Laplace mechanism does not easily fit the requirements.
Instead, the appropriate method is the discrete analog of the Laplace
mechanism, which is the (truncated)
Geometric mechanism, introduced by Ghosh \etal\ \cite{Ghosh:2009}, who
showed that it is the basis for unconstrained mechanisms.

\begin{definition}{Range Restricted Geometric Mechanism
    \protect\cite{Ghosh:2009} (\gm)}
\label{def:gm}
  Let $q$ be the true (unperturbed)
  result of a count query. The \gm responds with $\min(\max(0,
  q+\delta), n)$, where $\delta$ is a noise drawn from a random
  variable $X$ with a double sided geometric distribution,
  $\Pr[X=\delta]=\frac{(1-\alpha)^{|\delta|}}{1+\alpha}$ for
  $\delta \in \mathbb{Z}$.
\end{definition}

That is, \gm
adds noise from two sided geometric distribution to the query result
and remaps all outputs less than $0$ onto $0$ and greater than $n$ to $n$.
Though \gm does not include any zero rows, we observe that each column distribution in \gm has spikes at the extreme values, which tend to distort the true distribution quite
dramatically,  
as the next example shows. 

\begin{example}
  \label{eg:gm}
Consider the case of $n=2$, corresponding to a group of two
individuals, with a moderate setting of the privacy
parameter $\alpha=\frac{9}{10}$.
For an input of $1$ (i.e. one user has a 1, and the other has a 0), 
we obtain
that the probability of seeing an output of 0 is 
$\approx 0.47$, and the same for an output of 2. 
Meanwhile,
the probably of reporting the true output is $\approx 0.05$
--- in other words, the chance of seeing the
true answer is eighteen times lower than seeing an incorrect answer.
Meanwhile,
if the input is 0, then output 0 is returned with probability 
$\approx 0.53$:
so the mechanism is much more likely
to report the true answer when it is 0 than when it is 1. 
As we increase the privacy parameter $\alpha$ closer to 1 (more
privacy), the probability of outputs other than 0 and $n$ approaches 0. 
\end{example}

\eat{
\begin{example}
  \label{eg:gm}
Consider the case of $n=2$, corresponding to a group of two
individuals, with $\alpha = \frac9{10}$ (this corresponds to
$\epsilon$ differential privacy with the value $\epsilon \approx 0.1$).
For an input of $1$, 
we obtain $\Pr[0|1] = \Pr[2|1] = \frac{9}{19} \approx 0.47$
while $\Pr[1|1] = \frac1{19} \approx 0.05$ --- in other words, the chance of seeing the
true answer is eighteen times lower than seeing an incorrect answer.
Meanwhile, $\Pr[0|0] = \frac{10}{19} \approx 0.53$: so the mechanism is much more likely
to report the true answer when it is zero than when it is 1. 
As we increase $\alpha$ closer to 1, the probability of outputs other
than 0 and $n$ approaches 0. 
\end{example}
}

\eat{ 
As observed in Example~\ref{eg:gm}, an apparent weakness of \gm for interpretability is that it can give quite low
probabilities for reporting accurate answers.
In order to allow more sense to be made of the outputs of the designed
mechanisms, we can specify additional constraints to guide the
optimization to producing the best interpretable result. 
This prompts us to define a collection of
plausible properties that a mechanism can obey.
We will show analytically and empirically that these constraints do
not significantly affect the obtained objective function values
(i.e. the raw utility), but
considerably improve the interpretability of the resulting mechanism.
In particular, we demonstrate that it is possible to find a mechanism
which achieves all the given properties with only marginal increase
in objective function value, and improved interpretability. 
}

\section{Unconstrained Mechanism Design}
\label{sec:unconstrained}


A natural starting point is to use optimization tools to find optimal
mechanisms.
Following~\cite{Ghosh:2009}, 
the key observation is that the DP requirements can be written as
linear constraints over variables which represent the entries of the
mechanism.
The objective function is also a linear function of these variables.  
Formally, we define variables $\rho_{i,j}$ for
$\Pr[i|j]$, and write: 
\begin{align}
  \label{eq:lpobj}
  \text{minimize:} \quad \sum_{j=0}^{n}w_j\sum_{i=0}^{n} |i-j|^{p}\rho_{i,j} \\
  \text{subject to:}  \quad \label{eq:nonzero}
 0 \le \rho_{i,j} \le 1 \quad \forall i,j \in [n] \\
  \label{eq:col} 
 \sum_{i=0}^{n} \rho_{i,j} =1 \quad \forall j \in [n] \\
    \label{eq:dp1} 
    \rho_{i,j} \geq \alpha \rho_{i,j+1}, \text{ and } 
    \rho_{i,j+1} \geq \alpha \rho_{i,j} \quad \forall i \in [n], j \in [n-1]
\end{align}

The constraints can be understood as follows:
\eqref{eq:nonzero}, \eqref{eq:col} ensure that the entries of the matrix are
probabilities and each column encodes a
probability distribution, i.e. sums to $1$.  Constraint \eqref{eq:dp1}
encodes the differential privacy constraints.
Finally, \eqref{eq:lpobj} encodes a loss function of Definition~\ref{eq:lpobj}  for the notion of utility we aim for. 
We refer to the set of 
constraints~\eqref{eq:nonzero}, \eqref{eq:col} and \eqref{eq:dp1}  as
\basicdp.
%
%
The result is a linear program with a quadratic number of variables,
and a quadratic number of constraints, each containing at most a
linear number of variables.
Therefore, solving the resulting LP obtains a 
mechanism minimizing the given objective function with the desired
properties, in time polynomial in $n$.




Applying this approach yields results like those in
Figure~\ref{fig:anomalies}.
Our studies found that similar undesirable results were found across a
range of choices of $n$, $\alpha$ and loss function.
Simple attempts to prevent these outcomes are not effective.
For example, we can ensure that no entry is zero by adding a
constraint to the LP enforcing this.
However, the consequence is that rows which were zero are now set to
be the smallest allowable value, which is unsatisfying.  
Instead, we propose an additional set of properties to eliminate
degeneracy and provide more structure in our solutions.

\eat{
Ghosh et al's framework is indeed convenient and saves us from developing
custom hand designed mechanisms for different parameters regimes. However, while materializing these mechanisms by implementing linear programs for various loss functions and sizes, we found that many optimal mechanisms returned by solver had undesirable structure and cannot be used in practice. Figure~\ref{fig:anomalies} illustrates some such instances. These matrices are the optimal mechanisms obtained by solving linear programs with $\1,\2$ and our proposed $L_{i,d}$ loss functions.  We have fixed $\alpha=0.909$ for the purpose of demonstration. 

\para{ Observations.}  An overall inspection of \ref{fig:anomalies} shows that some rows of matrices specially those which are away from the middle rows have all zeros. We in fact see the case where all elements except in the middle row are zero. We did not encounter any anomaly while optimizing for the $L_0$ loss function. It is worth noting that degenerated behaviors were not artefacts of  some specific choices of $n,\alpha$ and a loss function but were prevalent across many combinations of $\alpha,n$ and $L_{i,d}$ for non zero $i$ and/or $d$'s. 
\par As a quick fix, we attempted to make the constraint~\ref{eq:nonzero} stricter in order force the solver to assign a non-zero value to each  $\rho_{i,j}$. However, 
the solver trivially satisfied this constraint by assigning  near zero values e.g. $0.00001$ to $\rho_{i,j}$'s in question. 
 
\para{ Discussion about zero rows.} We realize that the objective functions $\1,\2$ by design should enforce some structure to the matrix. E.g. in each column, values farther from diagonal should be output with smaller probability (than those which are closer) since they are penalized more. However, it appears that the solver at times penalizes farther elements so harshly that it makes them zero. Moreover, the entire row is assigned to zero as a consequence of the dp constraint. It could be possible to remap outcomes degenerated mechanisms back to the $[n]$. However, such rounding schemes may not exist for all pathological cases (e.g. a mechanism with a only one non zero row)  or may need to be designed for each mechanism. 

\para{Optimal Mechanism minimizing $\0$ loss function.}

It can be proved that \basicdp optimizing for $\0$ loss function is in fact the \gm.  Once again we realize that such unexpected outcome may be hard to interpret. It arises from the truncation step in \gm: the untruncated distributions
spreads the probability mass out over a large number of values
smaller than $0$ and larger than $n$, so that truncation creates
these large spikes at the extremes of the truncated range. This
is behavior can be observed for a wide range of input
sizes and moderate $\alpha$ values (more examples are seen in our
experimental study.
 
}

\section{Constrained Mechanism Design}
\label{sec:smallgroup}


\subsection{Structural Constraints}
\label{sec:properties}
We now propose a set of structural properties that
help to control the objective function
in addition to meeting differential privacy. 
We believe that these constraints are natural and intuitive 
and often observed in other mechanisms satisfying differential privacy. 
We present properties of three types: those which
operate on rows of the matrix, those which apply to columns of the
matrix, and those which apply to the diagonal. 


\para{Row Honesty (RH):} A mechanism is {\em row honest} if 
\begin{equation}
  \forall i, j. \Pr[i|i] \geq \Pr[i|j]
\label{eq:rowhonest}
\end{equation}
\noindent
Row honesty means that a mechanism should
have higher probability of reporting $i$ when the input is $i$
than for any other input. 

\para{Row Monotone (RM):}
A mechanism is {\em row monotone} if
\begin{align}\nonumber
  \forall 1 \leq j \leq i: \Pr[i|j-1] \leq \Pr[i|j] \\
  \forall i \leq j < n: \Pr[i|j+1] \leq \Pr[i|j]
\label{eq:rowmonotone}
\end{align}

This property generalizes row honesty: row monotonicity implies row
honesty.
It requires that entries in row $i$ are monotone non-increasing as
we move away from the diagonal element $\Pr[i|i]$. 
Note that row monotonicity is independent of differential privacy: we
can find mechanisms that achieve DP but are
not row monotone, and vice-versa. 

Analogous to the row-wise properties, we define monotonicity and
honesty along columns also.

\para{Column Honesty (CH):} A mechanism is {\em column honest} if
\begin{equation}
\forall i, j: 
\Pr[j|j] \geq \Pr[i|j].
\label{eq:columnhonesty}
\end{equation}

Column honesty requires that the mechanism
be \textit{honest} enough
to report the true answer more often than any individual false answer.  As
demonstrated by Example~\ref{eg:gm}, \gm does not obey column honesty.  

\noindent
\mbox{\para{Column Monotone (CM)}: A mechanism is {\em column
    monotone} if}
\begin{align}\nonumber
  \forall 1 \leq i \leq j: \Pr[i-1|j] \leq \Pr[i|j] \\
  \forall j \leq i < n: \Pr[i+1|j] \leq \Pr[i|j]
  \label{eq:columnmonotone}
\end{align}

As in the row-wise case, column monotonicity implies column honesty
(but not vice-versa). 
It captures the property that outputs closer to the true answer should
be more likely than those further away.

\eat{
A third column property emerges by analogy with the differential
privacy requirement (Definition~\ref{def:dp}).
We can apply the same relationship column-wise: 

\para{(Column) Publisher DP (CDP):} A mechanism respects {\em publisher DP}
with regard to a parameter $\alpha' \in [0,1]$
if
\begin{equation}
  \forall i,j: \alpha' \leq \frac{\Pr[i|j]}{\Pr[i+1|j]} \leq
  \frac1{\alpha'}.
  \label{eq:pubdp}
\end{equation}

This property captures the constraint that the ratio of probabilities
between neighboring {\em outputs} is bounded, compared to the usual DP
definition, which bounds the ratio for neighboring inputs. 
}

\para{Fairness (F):} A mechanism is {\em fair} when the probability of
reporting the true input is constant, i.e.
\begin{equation}
  \forall i, j : \Pr[i|i]=\Pr[j|j] := y.
  \label{eq:fair}
\end{equation}
Example~\ref{eg:gm} shows that \gm is not a fair mechanism. 
If a mechanism is fair and has row honesty, then all
off-diagonal elements are at most $y$, so the mechanism also satisfies
column honesty.
Symmetrically, a fair and column honest mechanism is row honest. 
While this may seem like a restrictive constraint, we observe that
mechanisms proposed in other contexts have this property, such as the
staircase mechanism of~\cite{GKOP:15}. 



\begin{lemma}
If a mechanism is required to be fair, then any mechanism that
minimizes the objective $O_{0,\sum}$ is simultaneously optimal for all
settings of weights $w_j$. 
\label{lemma:faircost}
\end{lemma}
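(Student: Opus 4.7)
The plan is to exploit the fact that, when $p=0$, the objective $O_{0,\sum}(\mech)$ only ``sees'' the diagonal entries of $\mech$, and that fairness collapses all those diagonal entries into a single scalar $y$. Once one observes this, the weight vector $w_j$ drops out of the objective entirely.

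Concretely, I would first rewrite
\[
O_{0,\sum}(\mech) \;=\; \sum_{j=0}^{n} w_j \sum_{i=0}^{n} |i-j|^{0}\,\Pr[i|j]
\;=\; \sum_{j=0}^{n} w_j \sum_{i\neq j} \Pr[i|j],
\]
using the convention (consistent with the paper's ``probability of a wrong answer'' interpretation of $p=0$) that $|i-j|^{0}$ equals $0$ when $i=j$ and $1$ otherwise. Since each column of $\mech$ sums to $1$ by \eqref{eq:col}, this simplifies to
\[
O_{0,\sum}(\mech) \;=\; \sum_{j=0}^{n} w_j\,(1 - \Pr[j|j]).
\]

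Next I would invoke the fairness constraint \eqref{eq:fair}: there exists a scalar $y$ such that $\Pr[j|j]=y$ for every $j$. Substituting and using $\sum_j w_j = 1$ yields
\[
O_{0,\sum}(\mech) \;=\; (1-y)\sum_{j=0}^{n} w_j \;=\; 1-y.
\]
Thus, over the feasible set of fair $\alpha$-differentially private mechanisms, the objective $O_{0,\sum}$ is a function of the single scalar $y$ alone, and does not depend on the weights $w_j$ at all. Minimizing $O_{0,\sum}$ is therefore equivalent to maximizing $y$, an optimization problem whose feasible set and objective are both independent of $w$. Any optimal $\mech^\star$ for one choice of weights is optimal for every choice of weights, which is the claim.

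There is essentially no obstacle: the only subtle point is making sure that ``minimizes the objective $O_{0,\sum}$'' is well-posed when combined with fairness (i.e.\ that the set of fair, \basicdp-feasible mechanisms is nonempty, so that $y$ achieves a well-defined maximum). This is immediate from existence of trivial fair mechanisms, e.g.\ the uniform mechanism with $\Pr[i|j]=1/(n+1)$ for all $i,j$, which satisfies all of \eqref{eq:nonzero}--\eqref{eq:dp1} and \eqref{eq:fair}. No use of the structural properties beyond fairness is required, so the argument is purely algebraic.
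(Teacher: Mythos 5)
Your proof is correct and follows essentially the same route as the paper's: rewrite $O_{0,\sum}$ as $\sum_j w_j(1-\Pr[j|j])$ using column stochasticity, substitute the common diagonal value $y$ from fairness, and observe the result $1-y$ is weight-independent. Your version is slightly more careful about the $|i-j|^0$ convention and feasibility, but there is no substantive difference.
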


\begin{IEEEproof}
  Let the diagonal element of the fair mechanism be $y$.
  The objective function value is
  \begin{equation}
\sum_{j \in [n]} \sum_{i \in [n]} w_j \Pr[i|j] (i-j)^{0}=\sum_{j \in [n]} w_j (1-y)=1-y
  \end{equation}

That is, the value is independent of the $w_j$s. 
\end{IEEEproof}

\noindent
\mbox{\para{Weak Honesty (WH):} A mechanism satisfies {\em weak honesty} if}
\begin{equation}
  \forall i: \Pr[i|i] \geq \frac{1}{n+1}
  \label{eq:weakhonest}
\end{equation}
We can consider this property a weaker version of column honesty, as
CH implies WH:
for any column $j$, summing the column honesty property over all rows $i$, we obtain
\[ (n+1) \Pr[i|i] = \sum_{i=0}^n \Pr[j | j] \geq \sum_{i=0}^n \Pr[i |
  j] = 1 \]
so after rearranging, we have $\Pr[i|i] \geq \frac{1}{n+1}$. 

Weak honesty ensures that a mechanism reports the true answer with
probability at least that of uniform guessing (formalized as 
the uniform mechanism \UM in Definition~\ref{def:uniform}). 
It also ensures that the mechanism does not have any rows that are all
zero (corresponding to outputs with no probability of being produced). 
\gm does not always obey weak honesty, as is shown by Example~\ref{eg:gm}. 

 

The final property we consider is a natural symmetry property
(formally, it is that the matrix $\mech$ is {\em centrosymmetric}):

\eat{
\begin{definition} 
  (\textit{Enforceable Property}) A property is enforceable if it can be explicitly materialized as a linear programming constraint (or in any other algorithmic way) without violating other constraints for all sizes of dataset and privacy parameter.
 \end{definition}
 We have already hinted about symmetry before but we define it again for completeness. \\
}

\para{Symmetry (S):} A mechanism is {\em symmetric} if
\begin{equation}
  \forall i, j: \Pr[i|j]=\Pr[n-i|n-j]
  \label{eq:symmetry}
\end{equation}

Since the input and output domains, and the objective functions are
symmetric, it is natural to seek mechanisms which are also symmetric.
Our next result shows that symmetry is always achievable without any
loss in objective function.

\begin{theorem}
Given a mechanism $M$ which meets a subset of properties $P$ from
those defined above, we can construct a symmetric mechanism $M^*$ which also
satisfies all of $P$ and achieves the same objective function value as $M$. 
\label{thm:symm_mechanism}
\end{theorem}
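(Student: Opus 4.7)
The plan is to construct $M^*$ by averaging $M$ with its centro-reflected copy. Formally, define the reflection $M'$ by $\Pr_{M'}[i|j] = \Pr_M[n{-}i \mid n{-}j]$, and set
\[
\Pr_{M^*}[i|j] \;=\; \tfrac{1}{2}\bigl(\Pr_M[i|j] + \Pr_M[n{-}i \mid n{-}j]\bigr).
\]
By construction, $\Pr_{M^*}[i|j] = \Pr_{M^*}[n{-}i \mid n{-}j]$, so $M^*$ is symmetric in the sense of \eqref{eq:symmetry}.

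Next I would verify that $M^*$ inherits every property of $M$. The argument has two ingredients. First, each property in our list is invariant under the centro-reflection $(i,j) \mapsto (n{-}i, n{-}j)$: for example, the DP constraint $\alpha \le \Pr[i|j]/\Pr[i|j{+}1] \le 1/\alpha$ becomes the analogous inequality with index $n{-}j{-}1$ playing the role of $j$, which is still in range; the diagonal is mapped to itself, so RH, CH, F, and WH are preserved; the two halves of RM (and likewise CM) are swapped by the reflection and column stochasticity $\sum_i \Pr[i|j]=1$ is clearly preserved. Hence $M'$ satisfies every property in $P$ that $M$ does. Second, every property in $P$, together with \basicdp itself, is expressed as a system of linear inequalities or equalities in the entries $\Pr[i|j]$, so the set of mechanisms satisfying $P \cup \basicdp$ is convex. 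Since $M^* = \tfrac12 M + \tfrac12 M'$ is a convex combination of two members of this set, $M^* \in P \cup \basicdp$ as required.

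It remains to check that the objective value is preserved. By linearity of $O_{p,\sum}$ in the entries of the mechanism, $O_{p,\sum}(M^*) = \tfrac12 O_{p,\sum}(M) + \tfrac12 O_{p,\sum}(M')$, so it suffices to show $O_{p,\sum}(M') = O_{p,\sum}(M)$. Substituting $i \mapsto n-i$, $j \mapsto n-j$ in the sum defining $O_{p,\sum}(M')$ and using $|(n{-}i)-(n{-}j)|^p = |i-j|^p$ gives
\[
O_{p,\sum}(M') \;=\; \sum_{j'} w_{n-j'} \sum_{i'} |i'-j'|^{p}\, \Pr_M[i'|j'],
\]
which equals $O_{p,\sum}(M)$ under the (implicit) symmetry $w_j = w_{n-j}$; this holds in particular for the default uniform weights $w_j = 1/(n{+}1)$. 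For the $\operatorname{max}$ variant $O_{p,\max}$ the analogous identity holds termwise and then $\max(a,b) \ge \tfrac12(a+b)$ actually gives $O_{p,\max}(M^*) \le O_{p,\max}(M)$ after reindexing, which is enough.

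The step I expect to require the most care is the first ingredient above: going through the seven properties and confirming that each one, together with the column-stochasticity and DP constraints, really is centro-symmetric as a set of inequalities (the two pieces of RM/CM are swapped rather than individually preserved, and the DP constraint changes the index $j$ to $n{-}j{-}1$, which must still lie in the valid range). The rest is a routine appeal to convexity and to the symmetry of $|i-j|^p$ under the reflection.
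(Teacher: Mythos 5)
Your construction is exactly the paper's: average $M$ with its centro-reflected copy $M'$ and check that nothing is lost. Where you differ is in how the two halves of the verification are organized. The paper checks property-preservation under averaging case by case --- for the DP constraint it works with the ratio form and invokes the mediant inequality $\min(\frac{a}{b},\frac{c}{d}) \leq \frac{a+c}{b+d} \leq \max(\frac{a}{b},\frac{c}{d})$, and for row monotonicity and row honesty it sums the corresponding inequalities for $M$ and $M^{S}$ --- whereas you collapse all of that into a single observation: every constraint (including DP, as written in the LP of Section~\ref{sec:unconstrained}) is linear in the entries, so the feasible set is convex and automatically contains the midpoint. That is cleaner and less error-prone, though it still leaves you the same index-chasing the paper does, namely verifying that each property is invariant under the reflection $(i,j)\mapsto(n-i,n-j)$ (the swapped halves of RM/CM, the reindexed DP constraint), which you correctly flag as the delicate step. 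You are also more careful than the paper about the objective: the paper only verifies invariance of \0 via the trace, while you treat general $O_{p,\oplus}$, make explicit the needed hypothesis $w_j = w_{n-j}$ (satisfied by the default uniform weights), and note that for the $\max$ variant one only gets $O_{p,\max}(M^*) \leq O_{p,\max}(M)$, which suffices for the intended use of the theorem even though the statement literally asserts equality. The only blemish is notational: ``$M^* \in P \cup \basicdp$'' conflates a set of constraints with the set of mechanisms satisfying them; you mean that $M^*$ lies in the intersection of the corresponding feasible regions.
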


We defer this proof to the Appendix. 

\para{Consequences of these properties.}
We first argue that these properties all contribute to avoiding the
degenerate mechanisms shown above.
The (column, row) honesty and monotonicity properties work to prevent
the ``spikes'' observed when a value far from the true input is made
excessively likely.
The (column) honesty properties do so by preventing a far output being more
likely than the true input; the (column) monotonicity properties do so more
strongly by ensuring that any further output is no more likely than one
that is nearer to the true input.
Fairness, column honesty and weak honesty prevent gaps (zero rows): they ensure that
the diagonal entry in each row is non-zero, and then the DP
requirement ensures that all other entries in the same row must also
be non-zero.

A simple observation is that in the (trivial) $n=1$ case, randomized
response is the unique optimal $\alpha$-differentially private
mechanism under any objective function $O_{p,\sum}$.
It is straightforward to check that the resulting mechanism meets all
the above properties when $p\geq \frac12$. 
We next show that there is an efficient procedure to find an optimal
constrained mechanism for any $n>1$.

\begin{theorem}
Given any subset of the structural constraints, we can find an optimal
(constrained) mechanism which respects these constraints in time
polynomial in $n$.
\end{theorem}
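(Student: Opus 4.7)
The plan is to show that the optimization problem obtained by adding any subset of the structural constraints to \basicdp remains a linear program of polynomial size, and hence can be solved in polynomial time by any standard LP solver (ellipsoid, interior point, etc.).

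First, I would verify that every one of the seven structural properties can be expressed as a collection of linear (in)equalities in the variables $\rho_{i,j}$. This is essentially inspection of the definitions: row honesty \eqref{eq:rowhonest} gives inequalities $\rho_{i,i}-\rho_{i,j}\ge 0$; row and column monotonicity \eqref{eq:rowmonotone}, \eqref{eq:columnmonotone} give inequalities between adjacent entries along a row or column; column honesty \eqref{eq:columnhonesty} is $\rho_{j,j}-\rho_{i,j}\ge 0$; fairness \eqref{eq:fair} gives equalities $\rho_{i,i}-\rho_{j,j}=0$; weak honesty \eqref{eq:weakhonest} is $\rho_{i,i}\ge\tfrac{1}{n+1}$; and symmetry \eqref{eq:symmetry} gives equalities $\rho_{i,j}-\rho_{n-i,n-j}=0$. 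Each property contributes at most $O(n^2)$ constraints, so even activating all seven properties simultaneously adds only $O(n^2)$ constraints to the $O(n^2)$ basic DP and stochastic constraints of \basicdp, over $O(n^2)$ variables.

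Next, I would combine the chosen constraints with \basicdp and the linear objective~\eqref{eq:lpobj} to form a single LP. Because the number of variables and the number of constraints are polynomial in $n$, and each coefficient is rational of bit-length bounded by the encoding of $\alpha$, standard polynomial-time LP algorithms return an optimal vertex solution in time polynomial in $n$ and in the bit-length of $\alpha$. The optimal $\rho_{i,j}$ recovered from this LP is, by construction, a mechanism satisfying \basicdp together with all the requested structural constraints, and minimizes the chosen objective function among such mechanisms.

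The only nontrivial point worth addressing is feasibility: an LP that is infeasible has no optimum to return. Here feasibility is immediate because the uniform mechanism (Definition~\ref{def:uniform}), in which every entry equals $\tfrac{1}{n+1}$, satisfies \basicdp (trivially meeting the DP ratios since all ratios are $1\in[\alpha,1/\alpha]$) and satisfies every one of the seven structural properties (row/column honesty and monotonicity hold with equality, fairness holds with $y=\tfrac{1}{n+1}$, weak honesty is tight, symmetry is trivial). Hence the feasible region is always nonempty and the LP admits an optimum, completing the argument. The main thing to be careful about is therefore not an obstacle but a verification: checking that each of the seven properties is genuinely linear and that their union does not inflate the problem size beyond polynomial, and noting that the uniform mechanism serves as a universal feasible witness.
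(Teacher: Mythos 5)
Your proposal is correct and follows essentially the same route as the paper's own proof: encode each requested property as linear (in)equalities, append them to the \basicdp constraints to obtain a polynomial-size LP, and establish feasibility via the uniform mechanism of Definition~\ref{def:uniform}. The only additions are minor elaborations (explicit constraint counts and bit-length considerations) that the paper leaves implicit.
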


\begin{IEEEproof}
  We break the proof into two pieces.  First, we argue that given any
  subset of structural constraints we can create a Linear Program
  describing it, and second we argue that there exists a mechanism
  satisfying them all.
Observe that all seven properties listed above  can
be encoded as a linear constraints.
For example, symmetry is written as
\[ \rho_{i,j} = \rho_{n-i, n-j} \quad \forall i, j \in [n]\]

while weak honesty is
\[ \rho_{i,i} \ge 1/(n+1). \]
Row monotonicity becomes
\[
\begin{array}{rl}
  \rho_{j,i-1} \leq \rho_{j,i} \quad &\forall j \in [n], i < j \\
  \rho_{j,i+1} \leq \rho_{j,i} \quad &\forall i \in [n-1], j < i
\end{array}
\]
  
Consequently, we can create a linear program of size polynomial in
$n$, by adding these to the \basicdp constraints~\eqref{eq:nonzero},
\eqref{eq:col} and \eqref{eq:dp1} established in
Section~\ref{sec:unconstrained}.
This shows the first part of the proof. 
Next, we show that any such LP is feasible by defining a trivial
baseline mechanism:

\begin{definition}[Uniform Mechanism, \UM]
\label{def:uniform}
  The {\em uniform me\-ch\-anism} of size $n$ has
  $\Pr[i|j] = \frac1{n+1}$, for all $i, j \in [n]$.
\end{definition}

That is, \UM ignores its input and picks an allowable output uniformly
at random.
It demonstrates that all our properties are (simultaneously)
achievable, albeit trivially.
By observation, the mechanism is symmetric and fair for any $\alpha' \le 1$.
It meets the inequalities specified for row monotonicity, column
 monotonicity and weak honesty with equality.  
\UM also satisfies differential privacy for all $\alpha \le 1$.
\end{IEEEproof}

Clearly, \UM is undesirable from the perspective of providing
utility.
We easily calculate that the objective function value $O_{0,\sum}$ achieved by
\UM is $\frac{n}{n+1}$, which is close to the maximum possible value of
1.
Note that we chose our definition of the \0 function to assign this mechanism a
(reweighted) score of~1. 


%


\eat{
\section{The one bit case}
\label{sec:onebit}

The case where a single user has a single private bit value is a
central case that has been studied over many decades.
We briefly revisit this case in the light of the objectives and properties defined
above. 


\eat{
To demonstrate the idea of randomized response, let's assume we have a bit (binary answer to a sensitive question) private to a database participant whose confidentiality is to be protected. In its most basic form, randomized response protects true value by adding a cancelling noise with some probability. More specifically, we record user's true experience ($\{0,1\}$) with probability $p$ and with probability $1-p$, we toss another coin with bias $q$ and report $1$ with probability $q$ and $0$ with probability $1-q$. We denote probability of a mechanism responding with value $y$ when true answer is $x$ by $\Pr[y|x]$. For a single bit case, $\Pr[0|0]=1-q+pq,\Pr[1|0]=q(1-p)$. So privacy guarantee (defined in \ref{sec:definitions}) provided by r.r is $\ln\big(\frac{1-p+pq}{1-p-q+pq}\big)$.

Due to its simplicity, r.r. is commonly adopted by differential privacy community also. E.g. \textit{RAPPOR}~\cite{rappor} is a large scale data aggregation system implemented by Google to collect statistics related to the Chrome browser. A simplified modus operandi of \textit{RAPPOR} is to encode the current snapshot of configuration values (which can be categorical/binary) of user's browser/system into a bloom filter and then perturb it applying r.r. (single/multiple times) and send this report to a server periodically or upon any undesirable occurrence. A server on other hand, in order to facilitate aggregation and decoding of reports (while improving false positive rate) from multiple clients, arrange them into cohorts sharing the same set of hash functions and computes number of reports with presence of a particular setting value. The \textit{RAPPOR} claims to provide longitudinal local privacy to a user i.e. jeopardy developed to client's privacy due to periodic collection of reports over long time. 
As another example, M{\"u}lle et al.~\cite{edbt15} solve graph clustering problem by   perturbing each entry in adjacency matrix. Though r.r. based algorithms are easy to implement, tune and analyze, naively perturbing each bit can easily damage a signal sub-optimally and requiring large datasets (e.g. number of users in \textit{RAPPOR}'s case) for obtaining reasonably accurate answers to queries. 
Along the lines of \textit{RAPPOR}, Melis et al. in~\cite{melis:2015} came up with cryptographic protocols which also operate by partitioning users into groups and aggregating pre-encrypted succinct user preference matrices in the same group. Use of r.r/cryptography based systems is justified when cloud service has large number of users or enough computing/network resources to implement a public key cryptographic protocol and wants to provide client based local privacy to a user. But what if a cloud service has identified a long tail of many clusters of users (based on some criteria e.g. usage pattern, demographics) with few users in each cluster and users in the same cluster can be assumed to share some level of trust? 
\par More specifically, we are interested in a simplified yet unexplored setting where each user has a single bit to protect and users are grouped into small cohorts say for business/technical reasons. All users in same cohort are able to reveal their bit to a trusted third party and it is a task of trusted third party to report sum of bits to the collecting server in a differentially private way. We emphasize that our intention is to develop a privacy preserving protocol for a \textit{small} cohort and a trusted third party is just a facilitator for providing local privacy to a cohort and doesn't appear anywhere into technical discussion ahead. The necessity of a trusted third party can eliminated if users in the same cohort have agreed to collude together to collect sum of their bits themselves and one of them (selected by running some randomized leader election protocol) can report this sum to the server after sanitizing it using a differentially private algorithm.  
}

\subsection{Randomized Response}

\begin{theorem}
In the one bit (binary) case, 
  Randomized Response is the unique optimal non-trivial
  $\alpha$-differentially private mechanism under any objective
  function $O_{p,\sum}$.
\end{theorem}

\begin{IEEEproof}
  The objective function is to minimize
 \[w_0 \Pr[1|0] 1^p + w_1 \Pr[0|1] 1^p = w_0 \Pr[1|0] + w_1
  \Pr[0|1].\]

  To achieve $\alpha$-differential privacy, we must have
  \begin{equation}\textstyle
    \Pr[0|0] \leq \frac1\alpha \Pr[0|1] \qquad \text{and} \qquad \Pr[1|1]
    \leq \frac1\alpha \Pr[1|0].
    \label{eq:rrdp}
  \end{equation}

We can observe that in order to minimize the objective function (for
non-negative $w_0$ and $w_1$), it suffices to maximize
$\Pr[0|0]$ and $\Pr[1|1]$, and so the inequalities in \eqref{eq:rrdp}
become equalities.
Thus, our objective function to minimize becomes:
\begin{align*}
  \textstyle
  w_0 (1 - \Pr[0|0]) + w_1 \Pr[0|1]
  &
  \textstyle
  = w_0 (1 - \frac1\alpha\Pr[0|1]) + w_1 \Pr[0|1] \\
  &   \textstyle= w_0  + (w_1 - \frac1\alpha w_0) \Pr[0|1].
\end{align*}

In the case that $w_1/w_0 < \frac1\alpha$ (or, symmetrically, if $w_0/w_1 <
\frac1\alpha$), the trivial solution is $\Pr[0|1] = \Pr[0|0] = 1$, i.e. the
mechanism ignores the input and always reports `0' (in the symmetric
case, it always reports `1').  

Otherwise $\alpha \leq w_1/w_0 \leq 1/\alpha$, 
and we have
\begin{align*}
    \textstyle
  \Pr[0|0] = \frac1\alpha \Pr[0|1] & = \frac1\alpha(1- \Pr[1|1])
\\&   \textstyle = \frac1\alpha(1 - \frac1\alpha\Pr[1|0])
\\&    \textstyle= \frac1\alpha(1 - \frac1\alpha(1 - \Pr[0|0]))
\end{align*}

Rearranging, we obtain
\[ \textstyle \Pr[0|0] (1 - \frac1\alpha^2) = \frac1\alpha (1 - \frac1\alpha) \]

and so $\Pr[0|0] = \frac{1}{1+\alpha}$, and $\Pr[1|1] = 1 -
\Pr[0|0]/\alpha = \frac{1}{1+\alpha}$. 

Consequently, we obtain an instance of randomized response with
$p=\frac{1}{1+\alpha}$. 
\end{IEEEproof}

\begin{fact}
  For $p \ge \frac12$, Randomized Response satisfies all properties
  listed in Section~\ref{sec:properties}
\label{cor:rr}
\end{fact}

The fact follows immediately by inspection: in the $2 \times 2$ case,
the mechanism is symmetric (Equation~\eqref{eq:rr}).
This entails fairness.
All other
properties reduce to the condition that $p \ge 1-p$, i.e. $p \ge
\frac12$.

\subsection{Exponential Mechanism}

\begin{theorem}
In the one bit (binary) case, the Exponential Mechanism results in an instance of
Randomized Response with $p=\frac{\exp(\epsilon/2)}{1 +
  \exp(\epsilon/2)}$. 
\end{theorem}

\begin{IEEEproof}
In the binary case, we have $D = R = \{0,1\}$.
Without loss of generality, we can assume that
$Q(0,0) = Q(1,1) := c$; $Q(1,0) = Q(0,1) := w$ (if not, this makes the
privacy guarantee loose in one case). 
We also assume that $c \geq w$, since we should make the true
response more likely than the incorrect response. 
Then, by definition, $s = c-w$.
The resulting mechanism has
  \begin{align*}
 \Pr[0|0] = & \frac{\exp(\epsilon c/2s)}{
  \exp(\epsilon w/2s) + \exp(\epsilon c/2s)} \\
 &  = \frac{\exp(-w \epsilon/2s)}{\exp(-w \epsilon/2s)}
  \frac{\exp(\epsilon c/2s)}{
    \exp(\epsilon w/2s) + \exp(\epsilon c/2s)} \\
 &  = \frac{\exp(\epsilon/2)}{1 + \exp(\epsilon/2)}
\end{align*}

Meanwhile, $\Pr[1|0] = 1- \Pr[0|0] = 1/(1+\exp(\epsilon/2))$,
$\Pr[1|1] = \Pr[0|0]$ and $\Pr[0|1] = \Pr[1|0]$. 
Consequently, the mechanism is equivalent to $\mathcal{R}$ from \eqref{eq:rr},
and the privacy guarantee is given by
$\Pr[0|0]/\Pr[1|0] = \exp(-\epsilon/2)$. 
\end{IEEEproof}

Note that this analysis exposes a gap in the privacy guarantee of the
exponential mechanism: the proof in~\cite{McSherry:2007} is very
general and provides $\alpha =\exp(-\epsilon)$ differential privacy in all
cases.
But in this specific instance, the guarantee is loose, and the
mechanism produced actually obtains $\exp(-\epsilon/2)$ privacy, much
stronger than the guarantee requested.
Consequently, it compromises on utility.
This highlights that for small group privacy, a tailored approach to
designing mechanisms can be preferable to a more generic method. 

\eat{
Now let's compare EM and
RR. $\Pr[1|0]=\frac{1-p}{2}=\frac{1}{1+e^{\frac{\epsilon}{2}}}, p
=\frac{1-e^{\frac{\epsilon}{2}}}{1+e^{\frac{\epsilon}{2}}}$. So we see
that EM introduces the gap of $\frac{\epsilon}{2}$ due to the way it
is defined. \cite{wang:2016} show that for $1$ bit case for Laplace
mechanism introduced in \cite{Dwork06},
$\Pr[0|0]=\Pr[1|1]=1-\frac{e^{\frac{-\epsilon}{2}}}{2}$.

As the privacy guarantees offered by \gm, and Laplace mechanism, and exponential mechanism are different, they are not equivalent.   
}

\subsection{Geometric Mechanism}

\begin{lemma}
  In the binary case, the Geometric mechanism results in an instance
  of Randomized Response with $p=\frac{1}{1+\alpha}$. 
\end{lemma}

\begin{IEEEproof}
When $n=1$, we can consider each input separately.
On input 0, the output is 0 if $\delta \leq 0$.  
From properties of the geometric distribution, we obtain 
\[\Pr[0|0] =\Pr[X \leq 0]=\frac{1-\alpha}{1+\alpha}. (1+
  \alpha+\alpha^2+\alpha^3+ \ldots) = \frac{1}{1+\alpha}.
\]
Then, 
$\Pr[1|0]=\Pr[X > 0]= \frac{\alpha}{1+\alpha}$.

The case for input 1 is symmetric.
Hence the claim follows. 
\end{IEEEproof}

}

\eat{
\subsection{Feasability: The Uniform Mechanism}
\label{sec:uniform}

\eat{
\begin{figure}[t]
\centering
\[
 \begin{pmatrix}
  \frac{1}{n+1} & \frac{1}{n+1}  & \frac{1}{n+1}& \cdots &\frac{1}{n+1} \\ 
 \frac{1}{n+1}& \frac{1}{n+1}& \frac{1}{n+1} &\cdots &  \frac{1}{n+1} \\ 
  \vdots   & \vdots & \vdots & \ddots & \vdots \\
   \frac{1}{n+1} & \frac{1}{n+1} & \frac{1}{n+1} & \cdots &	\frac{1}{n+1} 
 \end{pmatrix}\]
\caption{Uniform mechanism }\label{fig:unif}
\end{figure}
}

A first question is whether, for a given $n$, the properties listed in
Section~\ref{sec:properties} are achievable in a mechanism satisfying
differential privacy.
We show this by providing a trivial baseline mechanism.

\begin{fact}
\label{fact:U}
  \UM achieves all properties listed in Section~\ref{sec:properties}. 
\end{fact}

}

\begin{figure}[t]
\eat{
  \begin{minipage}{.4\linewidth}
     $\begin{pmatrix}
  \mathbf{x} & \Rightarrow  & \Rightarrow &\Rightarrow& \cdots &\Rightarrow \\
 \Leftarrow & \mathbf{y} & \Rightarrow  & \Rightarrow &\cdots &  \Rightarrow \\
 \Leftarrow & \Leftarrow & \mathbf{y} & \Rightarrow &\cdots & \Rightarrow\\
  \Leftarrow & \Leftarrow & \Leftarrow & \mathbf{y}  & \cdots &\Rightarrow \\
   \Leftarrow &\Leftarrow & \Leftarrow & \Leftarrow  & \cdots &\Rightarrow \\

  \vdots  & \vdots  & \vdots & \vdots & \ddots & \vdots \\
  \Leftarrow & \Leftarrow & \Leftarrow & \Leftarrow & \cdots &	\mathbf{x} 
 \end{pmatrix}$ \\ 
     \centerline{a) unimodal signature}
  \end{minipage}
  \begin{minipage}{.6\linewidth}
}
\centering
$ \begin{pmatrix}
  {x} & x\alpha  & x\alpha^{2}&x\alpha^{3}& \cdots &x\alpha^{n} \\
 y \alpha & {y} & y \alpha  & y\alpha^{2} &\cdots &  y\alpha^{n-1} \\
 y \alpha^{2} & y\alpha & {y} & y\alpha &\cdots & y\alpha^{n-2} \\
  y \alpha^{3} & y\alpha^{2} & y\alpha & {y}  & \cdots &y\alpha^{n-3} \\
   y \alpha^{4} & y\alpha^{3} & y\alpha^{2} & y\alpha  & \cdots &y\alpha^{n-4} \\

  \vdots  & \vdots  & \vdots & \vdots & \ddots & \vdots \\
  x\alpha^{n} & x\alpha^{n-1} & x\alpha^{n-2} & x\alpha^{n-3} & \cdots &	{x} 
\end{pmatrix}$
  \caption{Structure of \gm, where $x=\frac{1}{1+\alpha}$ and $y=\frac{1-\alpha}{1+\alpha}$ }
\label{fig:gm}
\end{figure}						

\subsection{The Geometric Mechanism}

Next, we revisit the (range restricted) Geometric Mechanism, \gm
(Definition~\ref{def:gm}).
In Figure~\ref{fig:gm}, we show the structure of the mechanism, which
can be derived by simple calculation from Definition~\ref{def:gm}. 
Below, we show that it enjoys a number of special properties.
In prior work, Ghosh \etal\ showed that \gm plays an important role, as
it can be transformed into an optimal mechanism for different
objectives.
Here, we argue
(proof omitted due to space constraints)
a more direct result: that \gm is directly optimal for a
uniform objective function\footnote{Note that, compared to \protect\cite{Ghosh:2009}, 
we define mechanisms to enforce differential privacy along rows of $\mech$
rather than columns.}

\begin{theorem}
\label{thm:gm_dpbasic}
\gm is the (unique) optimal mechanism satisfying \basicdp under the \0
objective function. 
\label{gm_basicdp}
\end{theorem}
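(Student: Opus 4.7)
The plan is to reduce the theorem to a linear program over only the diagonal entries of the mechanism, solve it via LP duality, and then show the off-diagonal entries are forced.

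First I would use equation~\eqref{eq:l0cost} to rewrite the objective: minimizing $\0(\mech)$ subject to \basicdp is equivalent to maximizing $\trace(\mech) = \sum_{i} \rho_{i,i}$. Next I would iterate the DP inequality \eqref{eq:dp1} along each row to obtain $\rho_{i,j} \geq \alpha^{|i-j|}\rho_{i,i}$ for all $i,j \in [n]$. Combining this with column stochasticity \eqref{eq:col} yields, for every $j$,
\[
1 = \sum_{i=0}^{n}\rho_{i,j} \;\geq\; \sum_{i=0}^{n}\alpha^{|i-j|}\rho_{i,i}.
\]
Thus writing $x_i := \rho_{i,i}$, any feasible mechanism's diagonal satisfies the linear constraints $Ax \leq \mathbf{1}$, $x \geq 0$, where $A$ is the symmetric $(n+1)\times(n+1)$ Kac--Murdock--Szeg\H{o} matrix with $A_{ij} = \alpha^{|i-j|}$. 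Maximizing $\trace(\mech)$ is therefore upper-bounded by the LP $\max \mathbf{1}^{\top}x$ subject to $Ax \leq \mathbf{1}$, $x \geq 0$.

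Second, I would exploit the fact that because $A$ is symmetric and both the cost vector and right-hand side are $\mathbf{1}$, this LP is \emph{self-dual}: its dual has the identical form. I would then verify by direct calculation from Figure~\ref{fig:gm} that \gm's diagonal entries $x^{*} = (\tfrac{1}{1+\alpha},\tfrac{1-\alpha}{1+\alpha},\ldots,\tfrac{1-\alpha}{1+\alpha},\tfrac{1}{1+\alpha})$ make every column of \gm sum to exactly $1$ with the DP lower bounds attained, i.e., $Ax^{*} = \mathbf{1}$. Since $x^{*}$ is primal feasible and (by self-duality) also dual feasible with every primal constraint tight and every coordinate strictly positive, complementary slackness certifies that $x^{*}$ is simultaneously optimal for the primal and dual. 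Hence no feasible mechanism under \basicdp can have trace exceeding that of \gm.

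Third, I would establish uniqueness in two stages. For uniqueness at the diagonal level: since $A$ is a symmetric Toeplitz matrix with entries $\alpha^{|i-j|}$, it is positive definite for $\alpha \in (0,1)$ and therefore invertible. If $\tilde{x}$ is any other primal optimum, strong duality together with $y^{*} = x^{*} > 0$ and complementary slackness forces $A\tilde{x} = \mathbf{1}$, and invertibility gives $\tilde{x} = x^{*}$. For uniqueness at the full-matrix level: the equality $\sum_{i}\alpha^{|i-j|}\rho_{i,i} = 1 = \sum_{i}\rho_{i,j}$ combined with the pointwise lower bound $\rho_{i,j} \geq \alpha^{|i-j|}\rho_{i,i}$ forces every off-diagonal entry to meet its bound, i.e., $\rho_{i,j} = \alpha^{|i-j|}\rho_{i,i}$, which is precisely the entrywise formula of \gm in Figure~\ref{fig:gm}.

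The main obstacle I anticipate is the uniqueness step: the optimality of \gm's diagonal follows cleanly from self-duality, but ruling out alternate optima requires invoking (or proving) invertibility of the KMS matrix $A$, which is standard but must be stated carefully since the argument relies on strict positivity of the dual optimum together with complementary slackness. Once this structural fact is in hand, the remainder of the proof is mechanical verification using the explicit form of \gm.
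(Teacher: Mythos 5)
Your proof is correct, but it certifies optimality by a different device than the paper. The paper's argument (relegated to the appendix) performs the same initial reduction you do: it observes that the DP constraints force $\rho_{i,j}\ge \alpha^{|i-j|}\rho_{i,i}$, introduces the Toeplitz matrix $A(\alpha)_{i,j}=\alpha^{|i-j|}$ together with slack variables $s_j = 1-\sum_i y_i\alpha^{|i-j|}$, and then asserts (somewhat informally, as ``missed potential'') that the trace is maximized when the slack vanishes; it then \emph{derives} the diagonal by explicitly solving the linear system $A(\alpha)y=\mathbf{1}$ via column eliminations and an induction, arriving at $y_0=y_n=\frac{1}{1+\alpha}$ and $y_i=\frac{1-\alpha}{1+\alpha}$ otherwise, which is \gm. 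You instead take the diagonal of \gm as given, verify $Ax^{*}=\mathbf{1}$, and invoke self-duality of $\max\{\mathbf{1}^{\top}x : Ax\le\mathbf{1},\,x\ge 0\}$ plus complementary slackness to certify optimality, with uniqueness following from strict positivity of $x^{*}$ and positive definiteness of the Kac--Murdock--Szeg\H{o} matrix. Your route makes rigorous exactly the step the paper leaves informal (that zero slack is optimal is, at bottom, the weak-duality inequality you write out), and your uniqueness argument is crisper than the paper's, which only establishes that the solution of $Ay=\mathbf{1}$ is unique rather than ruling out optima with nonzero slack. The trade-off is that your argument verifies a known answer rather than deriving it, and imports the (standard but external) positive-definiteness fact; you should also state explicitly that the argument requires $\alpha\in(0,1)$, since at $\alpha=1$ the interior diagonal entries of $x^{*}$ vanish, the strict-positivity step of complementary slackness fails, and uniqueness genuinely does not hold. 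The final step forcing the off-diagonal entries to meet their lower bounds is the same in both treatments.
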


\para{Limitations of \gm.}
Since \gm is `optimal' for \0, should we conclude our study here?
The answer is no, since \gm fails to satisfy many of the desirable
properties we identified in Section~\ref{sec:properties}, and as
illustrated in Example~\ref{eg:gm}.
We have already observed that \gm is not fair, and does not in general
satisfy column honesty (or column monotonicity) or weak honesty.
{Next, we identify parameter settings for when they do hold.}
  
 \begin{lemma}
 \label{lemma:uniformity}
\gm obeys weak honesty iff $n \geq \frac{2\alpha}{1-\alpha}$. 
 \end{lemma}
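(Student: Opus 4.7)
The plan is to reduce the weak honesty condition $\Pr[i|i]\ge\tfrac{1}{n+1}$ for all $i$ to a single tight inequality by identifying the smallest diagonal entry of \gm, and then to algebraically rearrange that inequality into the stated bound on $n$.

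First, I would read off the diagonal entries of \gm directly from Figure~\ref{fig:gm} (equivalently, from Definition~\ref{def:gm}). The two corner diagonal entries are $\Pr[0|0]=\Pr[n|n]=x=\tfrac{1}{1+\alpha}$, arising because truncation collapses all mass with $\delta\le 0$ onto $0$ (respectively $\delta\ge 0$ onto $n$). Every interior diagonal entry, $\Pr[j|j]$ for $0<j<n$, corresponds to $\delta=0$ with no truncation, giving $y=\tfrac{1-\alpha}{1+\alpha}$. Since $x-y=\tfrac{\alpha}{1+\alpha}\ge 0$ for $\alpha\in[0,1]$, the interior diagonal entries are the bottleneck for weak honesty, so the condition $\forall i\colon \Pr[i|i]\ge\tfrac{1}{n+1}$ reduces (for $n\ge 2$) to the single inequality $y\ge\tfrac{1}{n+1}$.

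Next, I would rearrange this inequality:
\begin{equation*}
\frac{1-\alpha}{1+\alpha}\ \ge\ \frac{1}{n+1}
\ \Longleftrightarrow\ (1-\alpha)(n+1)\ \ge\ 1+\alpha
\ \Longleftrightarrow\ n(1-\alpha)\ \ge\ 2\alpha,
\end{equation*}
which for $\alpha<1$ is equivalent to $n\ge\tfrac{2\alpha}{1-\alpha}$. This gives both directions of the ``iff'', since each step is a biconditional.

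The only subtlety, and the place I would be careful, is the boundary case $n=1$: here there are no interior diagonal entries, only the corners $x=\tfrac{1}{1+\alpha}$. For $n=1$ weak honesty asks $x\ge\tfrac12$, i.e.\ $\alpha\le 1$, which always holds; and the claimed bound $n\ge\tfrac{2\alpha}{1-\alpha}$ with $n=1$ gives $\alpha\le\tfrac13$. I would therefore either restrict the statement implicitly to $n\ge 2$ or, more cleanly, note that in the $n=1$ regime \gm coincides with randomized response and weak honesty holds trivially, so the bound in the statement is the correct one for the generic case $n\ge 2$ governed by the interior diagonal. Apart from this corner, the proof is a direct calculation once the structure in Figure~\ref{fig:gm} is in hand.
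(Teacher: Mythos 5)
Your proposal is correct and follows essentially the same route as the paper: identify $y=\frac{1-\alpha}{1+\alpha}$ as the smallest diagonal entry (since $y<x$), reduce weak honesty to $y\ge\frac{1}{n+1}$, and rearrange to $n\ge\frac{2\alpha}{1-\alpha}$. Your extra care about the $n=1$ corner case is a reasonable refinement the paper does not spell out, but it does not change the argument.
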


 \begin{IEEEproof}
Weak honesty requires the diagonal elements to all exceed
$\frac{1}{n+1}$. 
Since $y < x$, we focus on $y$. 
We require $y \geq \frac{1}{n+1}$ i.e $\frac{1-\alpha}{1+\alpha} \geq
\frac{1}{n+1}.$
This reduces to $n+1 \geq \frac{1+\alpha}{1-\alpha}$, giving the
requirement $n \geq \frac{2\alpha}{1-\alpha}$.
 \end{IEEEproof} 

 \gm satisfies the column monotonicity condition for many $i, j$
 pairs. 
 The critical place in the matrix where it can be violated is 
between the first and second rows (symmetrically, between 
penultimate and final rows). 
This corresponds to the problematic behavior of \gm to report extreme
outputs (0 or $n$) disproportionately often. 
 \begin{lemma}
\label{lemma:cmonotone}
   \gm achieves column monotonicity iff $\alpha \leq \frac{1}{2}.$
\end{lemma}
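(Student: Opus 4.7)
The plan is to identify exactly which inequalities in \gm are critical for column monotonicity and reduce them to a single algebraic condition on $\alpha$. The text right before the lemma already signals that the problematic comparisons occur between row 0 and row 1 of each column (and, by the symmetry of \gm, between row $n-1$ and row $n$); everything else in the mechanism is already well-behaved.

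First I would read off the entries from Figure~\ref{fig:gm}: for an interior column $j$ (with $1 \le j \le n-1$), row 0 contributes $x\alpha^{j}$, row 1 contributes $y\alpha^{j-1}$, and in general for $1 \le i \le n-1$ the entry is $y\alpha^{|i-j|}$; the two extreme rows $i=0$ and $i=n$ use the $x$ prefactor instead of $y$. Column monotonicity requires $\Pr[i-1|j] \le \Pr[i|j]$ for $i \le j$ and $\Pr[i+1|j]\le \Pr[i|j]$ for $i \ge j$. For adjacent pairs of rows in the ``interior'' (both using the $y$ prefactor), the inequality is $y\alpha^{k} \le y\alpha^{k-1}$, i.e., $\alpha \le 1$, which holds automatically. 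The only nontrivial comparisons are therefore the boundary transitions between the $x$-row and the adjacent $y$-row.

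Next I would make those boundary comparisons explicit. For $j \ge 1$, the top-edge condition $\Pr[0|j]\le \Pr[1|j]$ reads
\begin{equation*}
x\alpha^{j} \le y\alpha^{j-1} \iff x\alpha \le y \iff \tfrac{\alpha}{1+\alpha} \le \tfrac{1-\alpha}{1+\alpha} \iff \alpha \le \tfrac12.
\end{equation*}
By the centrosymmetry of \gm (or by the analogous direct computation), the bottom-edge condition $\Pr[n|j] \le \Pr[n-1|j]$ for $j \le n-1$ reduces to the identical inequality $x\alpha \le y$. One also has to check the two edge columns $j=0$ and $j=n$: in column $0$, the comparison between rows $n-1$ and $n$ gives $x\alpha^{n} \le y\alpha^{n-1}$, i.e., $x\alpha \le y$ again, and analogously for column $n$. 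The comparison involving the diagonal element itself (e.g.\ $x \ge y\alpha$ in column $0$) reduces to $1 \ge \alpha(1-\alpha)$, which is unconditional.

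Combining these observations, every column monotonicity constraint either holds unconditionally or is equivalent to $x\alpha \le y$, so the lemma follows by the equivalence $x\alpha \le y \iff \alpha \le \tfrac12$. For the converse, I would note that if $\alpha > \tfrac12$ then already $\Pr[0|1] = x\alpha > y = \Pr[1|1]$, exhibiting an explicit violation. There is no real obstacle here; the main care is just to enumerate the boundary cases correctly and invoke the symmetric structure of \gm to halve the work.
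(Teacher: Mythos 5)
Your proof is correct and follows essentially the same route as the paper: isolate the boundary comparison between the $x$-row and the adjacent $y$-row, reduce it to $x\alpha \le y$, i.e.\ $\alpha \le \frac12$, and observe that all other column comparisons hold unconditionally (the paper simply asserts this last step is ``straightforward to check,'' whereas you verify it). You also state the critical inequality in the right direction ($\Pr[0|1]\le\Pr[1|1]$, i.e.\ $x\alpha\le y$), whereas the paper's text writes it reversed before arriving at the same conclusion.
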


\begin{IEEEproof}
We require $\Pr[1|1] \leq \Pr [0|1]$, i.e. $y \leq \alpha x$ or 
$\frac{1-\alpha}{1+\alpha} \leq \frac{\alpha}{1+\alpha}$.
This gives the condition $\alpha \leq \frac{1}{2}$.
It is straightforward to check that this ensures monotonicity in all
other columns. 
\end{IEEEproof}

By inspection, \gm is always symmetric, and row monotone.
The (\0) objective function value achieved by \gm is
\begin{align*}
  \textstyle
\frac{n+1}{n}\big(
  1- \frac{(n-1)y + 2x}{n+1}\big)=
  \frac{n+1}{n}\big(1- \frac{n-1}{n+1}\frac{1-\alpha}{1+\alpha}-\frac{2}{(1+\alpha)(n+1)}\big)=
\frac{2\alpha}{1+\alpha}
\end{align*}

We next design a different
explicit mechanism which achieves more of the desired properties.

\eat{
 A natural question to ask here is, can we construct a privacy preserving mechanism for count query problem that exhibits all (or some) structural properties that \gm is deficient of? This question becomes particularly interesting after realizing that \UM (albeit not in optimal way), satisfies all structural properties that \gm is short of. Motivated from \UM, if we can then satisfy all/some of these features, a more general question is- how much do we loose/gain in terms of utility, when we include/exclude a property? Moreover, does satisfying a group of properties, ensure compliance of some other properties without additional cost? If this is the case then for a given objective function, can we find a set with minimum number properties which also guarantees some additional properties?   
Furthermore, let's consider another situation. Our intuition may lead us to believe that \UM always offer worse utility than \gm due to its independence from $\alpha$. But consider value of $\alpha$ close to $1$. We can check that each entry in first and the last rows of \gm will be close to $0.5$ and leaving other entries to tend to zero. Though, \UM and \gm offer the same \0 score, the actual mechanism matrix consists of undesirable zero rows making it futile for practical use. Under such circumstances, is going for a higher privacy budget our only hope? Or can we construct mechanisms that can draw some utility even in at higher $\alpha$ value?
 Having such study endows a mechanism designer with choices to tailor her mechanism according to needs of variety of information consumers. This kind of study is particularly useful in case of smaller and moderate size of datasets and for tighter privacy budget parameters as for larger datasets, as we will show shortly, all mechanisms start exhibiting desired properties as much smaller cost. 
}

\subsection{Explicit Fair Mechanism}
\label{sec:fair}

Although we can achieve any desired combination of properties by
solving an appropriate linear program, it is natural to ask whether
there is any non-trivial explicit mechanism that achieves properties
such as fairness with an objective function score comparable to that
of \gm.
We answer this question in the positive. 
First, we consider the limits of what can be achieved under
fairness. 
In the case of \gm, all DP inequalities are tight.
This is not possible when fairness is demanded.
A fair mechanism $M$ with all DP inequalities tight would be completely
determined: $M_{i,j} = y \alpha^{|i-j|}$ for some $y$.
It is easy to calculate for any such mechanism that there is no
setting of $y$ which ensures that all columns sum to 1, a contradiction.
Hence, we cannot have a fair mechanism with all DP inequalities tight. 
Nevertheless, trying to achieve tightness provides us with a 
bound on what can be achieved.

\begin{lemma}
 \label{lemma:fairness_constraint}
Let \F  be a fair mechanism of size $(n+1) \times
(n+1)$ with $y$ as the diagonal element.
Then
$y \leq \frac{1-\alpha}{1+\alpha-2\alpha^{\frac{n}{2}+1}}$.
\end{lemma}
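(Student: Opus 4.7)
The plan is to derive the bound by chaining the differential privacy inequalities outward from the diagonal, and then insisting that the resulting column-sum lower bound be compatible with the column-stochastic constraint. Concretely, fairness fixes $\rho_{i,i}=y$ for every $i$. Applying constraint~\eqref{eq:dp1} iteratively along row $i$, starting from $\rho_{i,i}=y$ and stepping either left or right one column at a time, I get $\rho_{i,j}\ge y\,\alpha^{|i-j|}$ for every $i,j\in[n]$. This is the standard ``tightest squeeze'' that DP allows on off-diagonal entries when the diagonal is pinned.

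Next I would combine these entrywise lower bounds with the column-stochastic requirement~\eqref{eq:col}. For any column $j$,
\[
1 \;=\; \sum_{i=0}^{n}\rho_{i,j} \;\ge\; y\sum_{i=0}^{n}\alpha^{|i-j|},
\]
so $y \le \bigl(\sum_{i=0}^{n}\alpha^{|i-j|}\bigr)^{-1}$ for every $j$. The strongest (smallest) upper bound on $y$ comes from the column $j$ that maximizes $S_j:=\sum_{i=0}^{n}\alpha^{|i-j|}$. Since $\alpha\in[0,1]$ and the exponents $|i-j|$ are minimized when $j$ is central, $S_j$ is maximized at the middle column $j=n/2$ (by a straightforward check: moving $j$ one step toward the center trades an $\alpha^k$ term for an $\alpha^{k-1}$ term at one end versus the reverse at the other, which weakly increases the sum).

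I would then evaluate $S_{n/2}$ explicitly as a two-sided geometric sum:
\[
S_{n/2} \;=\; 1 \,+\, 2\sum_{k=1}^{n/2}\alpha^{k} \;=\; 1 + 2\alpha\,\frac{1-\alpha^{n/2}}{1-\alpha} \;=\; \frac{1+\alpha-2\alpha^{\,n/2+1}}{1-\alpha}.
\]
Substituting into $y \le 1/S_{n/2}$ yields $y \le \dfrac{1-\alpha}{1+\alpha-2\alpha^{n/2+1}}$, as claimed.

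The main obstacle I expect is the ``maximizing column'' step: one has to justify that the middle column is indeed worst, and be careful about parity of $n$ (for odd $n$ the center lies between two indices, and the formula as written implicitly treats $n/2$ as a real exponent; I would note that this either requires $n$ even, or that the same argument applied to $j=\lfloor n/2\rfloor$ gives an equivalent bound up to the stated closed form). Everything else is a routine geometric-series calculation combined with the DP chaining already used elsewhere in the paper.
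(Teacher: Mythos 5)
Your proposal is correct and follows essentially the same route as the paper: chain the DP inequalities outward from the fixed diagonal value $y$ to get $\Pr[i|j]\ge y\alpha^{|i-j|}$, sum the middle column $j=n/2$, and equate to $1$ to extract the geometric-series bound. Your extra care about which column maximizes the sum and about the parity of $n$ only makes explicit what the paper handles by taking $n$ even without loss of generality.
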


\begin{IEEEproof}
There are some slight differences depending on whether we consider odd
or even values of $n$.
Without loss of generality, take $n$ even.
We will consider a fixed column $j$. 
For all $i$, we are required to have $\Pr[i|i] = y$ for some $y$.
Repeatedly applying the DP inequality, we obtain an upper bound
involving $y$ as 
$\Pr[i|j] \geq y\alpha^{i-j}$ when $j <i$ and $\Pr[i|j] \geq
y\alpha^{j-i}$ when $i >j$. 
Summing these for any given column $j$ and equating to 1 provides an
upper bound on $y$.
We get the tightest bound by picking column
$j=\frac{n}{2}$.  Then $y + 2y \sum_{j=1}^{\frac{n}{2}} \alpha^{j}\leq
1$, so:
%
\begin{equation}
  y \leq \frac{1}{1+2\sum_{j=1}^{\frac{n}{2}} \alpha^{j}}
=\frac{1-\alpha}{1+\alpha-2\alpha^{\frac{n}{2}+1}}
\label{eq:fairvalue}
\end{equation}
For $n$ large enough, we can neglect the
$\alpha^{n/2 + 1}$ term, and approximate this quantity
by $\frac{1-\alpha}{1+\alpha}$.
\end{IEEEproof}

\begin{figure}[t]
\centering
  $ \begin{pmatrix}
y & y\alpha & y\alpha^{2} & y\alpha^{3} & y\alpha^{4} & y\alpha^{4} & y\alpha^{4} & y\alpha^{4}  \\
y\alpha & y & y\alpha & y\alpha^{2} & y\alpha^{3} & y\alpha^{3} & y\alpha^{3} & y\alpha^{3}  \\
y\alpha & y\alpha & y & y\alpha & y\alpha^{2} & y\alpha^{3} & y\alpha^{3} & y\alpha^{3}  \\
y\alpha^{2} & y\alpha^{2} & y\alpha & y & y\alpha & y\alpha^{2} & y\alpha^{2} & y\alpha^{2}  \\
y\alpha^{2} & y\alpha^{2} & y\alpha^{2} & y\alpha & y & y\alpha & y\alpha^{2} & y\alpha^{2}  \\
y\alpha^{3} & y\alpha^{3} & y\alpha^{3} & y\alpha^{2} & y\alpha & y & y\alpha & y\alpha  \\
y\alpha^{3} & y\alpha^{3} & y\alpha^{3} &y\alpha^{3} & y\alpha^{2} & y\alpha & y & y\alpha  \\
y\alpha^{4} & y\alpha^{4} & y\alpha^{4} & y\alpha^{4} & y\alpha^{3} & y\alpha^{2} & y\alpha & y  \\
 \end{pmatrix}$

\eat{
   \begin{minipage}{.2\linewidth}
$\begin{pmatrix}
y & y\alpha & y\alpha^{2} & y\alpha^{3} & y\alpha^{3} & y\alpha^{3} & y\alpha^{3}  \\
y\alpha & y & y\alpha & y\alpha^{2} & y\alpha^{3} & y\alpha^{3} & y\alpha^{3} \\
y\alpha & y\alpha & y & y\alpha & y\alpha^{2} & y\alpha^{2} & y\alpha^{2}  \\
y\alpha^{2} & y\alpha^{2} & y\alpha & y & y\alpha & y\alpha^{2} & y\alpha^{2}  \\
y\alpha^{2} & y\alpha^{2} & y\alpha^{2} & y\alpha & y & y\alpha & y\alpha  \\
y\alpha^{3} & y\alpha^{3} & y\alpha^{3} & y\alpha^{2} & y\alpha & y & y\alpha  \\
y\alpha^{3} & y\alpha^{3} & y\alpha^{3} & y\alpha^{3} & y\alpha^{2} & y\alpha & y  \\
  \end{pmatrix}$ \\ 
  \end{minipage} }
\caption{Explicit fair mechanism for $n=7$}
\label{fig:explicit_mechanism}
\end{figure}

\eat{
\begin{lemma}
\label{lemma:fair_wHon_row/col_monotone}
A mechanism with fairness, weak honesty and row (column) monotone
enforced, satisfies column (row) honesty at no extra cost.
\end{lemma}
\begin{IEEEproof}

{\todo clarify this result, move earlie}
  
Consider a case of $i \leq j$. Let $y$ be the diagonal element. Wlog, for any row $i$, row monotone makes certain that $y\geq \Pr[i|j-1]\geq \Pr[i|j-2] \geq ... \geq \Pr[i|0]$. Similarly, for row $i+1$, $y\geq \Pr[i+1|j]\geq \Pr[i+1|j-1] \geq ... \geq \Pr[i+1|0]$. For mean for each column $j$, $y\geq \Pr[i+1|j]\geq \Pr[i+2|j]\geq ...\geq \Pr[n|j]$. Argument is symmetric for $i>j$ and for the case when row monotone is exhibited by a mechanism satisfying fairness, weak honesty and column monotone.
\end{IEEEproof}
}

Note that for optimality under an objective function $O_{p,\sum}$, we should
make $y$ as large as possible.
Hence, any optimal mechanism will have $y$ as close to this value as
possible. 
Indeed, the above proof helps us to design an explicit mechanism \E that achieves
fairness.
The proof argues that in column $n/2$, the smallest values we can
obtain above and below the $y$ entry are $\alpha y$, $\alpha^2 y$ and
so on up to $\alpha^{n/2}y$.
Then the sum of these terms is set to 1.
All other columns must also sum to 1; a simple way to achieve this is
to ensure all columns contain a permutation of the same set of terms. 
To ensure DP is satisfied, we should arrange these so that
row-adjacent entries differ in their power of $\alpha$ by at most
one.	

Our explicit fair mechanism \E is then defined as follows: 
\begin{equation}
\Pr[i|j] =
\begin{cases}
  y\alpha^{|i-j|} & \text{if } |i-j| < \min(j, n-j) \\
  y\alpha^{\lceil \frac{|i-j| + \min(j, n-j)}{2}\rceil} & \text{otherwise}
\end{cases}
\label{eq:fairmechanism}
\end{equation}

Here, $y$ is set to $\frac{1-\alpha}{1 + \alpha - 2\alpha^{n/2+1}}$,
  i.e. the value determined in Equation~\eqref{eq:fairvalue}.
From the proof of Lemma~\ref{lemma:faircost} and \eqref{eq:l0cost}, we have that the \0 score of
this mechanism is $\frac{n+1}{n}(1 - y)$, as it maximizes $y$ subject
to the bound of Lemma~\ref{lemma:faircost}. 

Figure~\ref{fig:explicit_mechanism} shows the instantiation of this
mechanism for the case $n=7$. 
Comparing to \gm, we see that the diagonal elements are slightly
increased, with the exception of the two corner diagonals, which are
decreased.
It is tempting to try to obtain the mechanism via the Exponential
Mechanism, by using a quality function applied to $|i-j|$ similar in
form to \eqref{eq:fairmechanism}.
Note however, that the constant factors of $2$ in its definition
\eqref{eq:em} leads to a considerably weaker result than this explicit
construction, equivalent to halving the privacy parameter $\epsilon$. 
It is easy to check that in the $n=7$ example, the mechanism is symmetric,
and meets all of the properties defined in
Section~\ref{sec:properties}.
In fact, this is the case for all values of $n$. 
The proof is rather lengthy and proceeds by considering a number of
cases; we omit it for brevity. 


\begin{theorem}
  \E is an optimal mechanism under \0 that
  satisfies all properties listed in Section~\ref{sec:properties}. 
\label{thm:fair}
\end{theorem}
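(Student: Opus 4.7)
The plan is to decouple the two claims of the theorem: first, that $\E$ attains the minimum $\0$ cost among all mechanisms satisfying the listed properties, and second, that $\E$ actually satisfies all of those properties. The optimality piece is short and rests on earlier lemmas, while the properties piece is essentially a case analysis on the piecewise definition~\eqref{eq:fairmechanism}.

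For optimality, I would combine Lemma~\ref{lemma:faircost} with Lemma~\ref{lemma:fairness_constraint}. Since fairness is one of the required properties, any admissible mechanism has a common diagonal value $y$, and Lemma~\ref{lemma:faircost} together with~\eqref{eq:l0cost} expresses the $\0$ cost as $\tfrac{n+1}{n}(1-y)$. Lemma~\ref{lemma:fairness_constraint} upper bounds $y$ by $\tfrac{1-\alpha}{1+\alpha - 2\alpha^{n/2+1}}$, which is exactly the value attained by $\E$. Consequently no fair mechanism can achieve a smaller cost, and in particular no mechanism satisfying the full property set can beat $\E$.

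For the property verification, fairness holds by construction and symmetry follows directly from~\eqref{eq:fairmechanism} since $|i-j|=|(n-i)-(n-j)|$ and $\min(j,n-j)=\min(n-j,n-(n-j))$. Weak honesty reduces to $y \geq \tfrac{1}{n+1}$, which follows from the identity $y = 1/(1+2\sum_{k=1}^{n/2}\alpha^k)$ established in the proof of Lemma~\ref{lemma:fairness_constraint} — the denominator is at most $n+1$ because $\alpha \leq 1$. Row and column honesty are then immediate because every entry in~\eqref{eq:fairmechanism} is of the form $y\alpha^{e}$ with $e\geq 0$, hence at most $y$. What remains is to verify differential privacy (neighboring columns) and row/column monotonicity; both reduce to showing that the exponent of $\alpha$ in $\Pr[i|j]$ changes by at most one between row-adjacent or column-adjacent cells, and is non-decreasing as we move away from the diagonal.

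The main obstacle will be handling the boundary between the two branches in~\eqref{eq:fairmechanism}, where the ``straight'' exponent $|i-j|$ transitions to the ``ceiling'' expression $\lceil (|i-j|+\min(j,n-j))/2 \rceil$. Near this transition both $|i-j|$ and $\min(j,n-j)$ can move when we step $j \mapsto j+1$, and one must argue that the net change in the exponent is always $0$ or $\pm 1$. I would proceed by a case analysis over four regions (above vs.\ below the diagonal, crossed with $j \leq n/2$ vs.\ $j > n/2$), using the symmetry verified above to halve the work, and in each region verify both that the DP ratio between $\Pr[i|j]$ and $\Pr[i|j+1]$ lies in $[\alpha,1/\alpha]$ and that the exponent is monotone non-decreasing as $|i-j|$ grows. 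Row and column monotonicity then follow immediately from exponent monotonicity since $\alpha \leq 1$. The case analysis is somewhat tedious but mechanical, which is consistent with the authors' decision to omit the detailed proof for brevity.
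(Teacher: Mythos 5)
Your decomposition matches the paper's: optimality follows from Lemma~\ref{lemma:faircost} plus the bound of Lemma~\ref{lemma:fairness_constraint}, and the remaining work is a four-region case analysis of the exponent of $\alpha$ across the branch boundary of \eqref{eq:fairmechanism} to establish DP and row/column monotonicity --- this is exactly the structure of the paper's (omitted) proof, and your shortcuts for honesty (every entry is $y\alpha^{e}$ with $e \ge 0$, hence at most the diagonal value $y$) and for weak honesty (bounding the denominator of $y$ by $n+1$) are valid minor variations on the paper's route, which instead derives honesty from monotonicity and WH from CH. The one step you have skipped, and which the paper proves explicitly, is that \E is a valid mechanism at all: you must verify that \emph{every} column of \eqref{eq:fairmechanism} sums to $1$, not just the middle column used to derive the bound on $y$. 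This is not automatic --- it is precisely what the ceiling expression $\lceil(|i-j|+\min(j,n-j))/2\rceil$ is engineered to achieve, by making each column a permutation of the same multiset $\{y, y\alpha, y\alpha, \ldots, y\alpha^{n/2}, y\alpha^{n/2}\}$ --- and your optimality argument silently relies on it, since \E only ``attains'' the upper bound on $y$ if it is in fact column stochastic. Add that verification (with the small parity adjustment for odd $n$) and the plan is complete.
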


\subsection{Comparing mechanisms}
\label{sec:comparing}

In Section~\ref{sec:properties}, we define 7 different properties,
denoted as RH, RM, CH, CM, WH, F, and S.
We can seek a mechanism that satisfies any subset of these, suggesting
that there are 128 combinations to explore.
However, we are able to dramatically reduce this design space with the
following analysis based on the \0 score function.

\begin{figure}[t]
\centering
\tikzset{
decision/.style = {diamond, draw, fill=green!70, 
  text width=5em, text badly centered,  inner sep=0pt},
block/.style = {rectangle, draw, fill=red!30, 
  text width=5em, text centered, rounded corners, minimum height=4em},
line/.style = {draw, -latex'},
cloud/.style = {draw, ellipse,fill=red!30, 
  minimum height=2em},
subroutine/.style = {draw,rectangle split, rectangle split horizontal,
  rectangle split parts=3,minimum height=1cm,
  rectangle split part fill={red!50, green!50, blue!20, yellow!50}},
connector/.style = {draw,circlefill=yellow!20},
data/.style = {draw, trapezium,fill=olive!20}
}
\resizebox{0.7\columnwidth}{!}{
\begin{tikzpicture}[scale = 0.7, auto]
\node [decision] (want_fairness) at (5,10) {Want Fairness?};
\node [block] (fair) at (2,8) {Fair Mechanism};
\node [decision, below of=want_fairness] (want_column) at (8,8) {Want Column Property?};
\node[decision] (n_alpha) at (3,5) {$n \ge
  \frac{2\alpha}{1-\alpha}$?}; 
\node [decision, below of=want_column, right of=n_alpha] (want_weak) at (4,4) {Want Weak Honesty?};
\node [block] (gm) at (2, 0) {Geometric Mechanism};
\node [block] (wm) at (7, 0) {WH};
\node [block, below of=want_column, right of=want_weak] (wmc) at (8, 4) {WH + CM};
%
\path [line] (want_fairness)--node {yes}(fair) ;
\path [line] (want_fairness) --node {no} (want_column);
\path [line] (want_column) --node {no} (n_alpha);
\path [line] (want_column) --node {yes} (wmc);
\path [line] (n_alpha) --node {yes} (gm);
\path [line] (n_alpha) --node {no} (want_weak);
\path [line] (want_weak) --node {no} (gm);
\path [line] (want_weak) --node {yes} (wm);

\end{tikzpicture}
}
\caption{Flowchart of properties for \0 objective  ($\alpha > \frac12$)}
\label{fig:flowchart}
\eat{\begin{tikzpicture}[scale = 0.7, auto]
\node [decision] (want_fairness) at (5,7) {Want Fairness?};
\node [block] (fair) at (2,5) {Fair Mechanism};
\node[decision, right of=want_fairness] (n_alpha) at (7,5) {$n \ge
  \frac{2\alpha}{1-\alpha}$?}; 
\node [decision, right of=want_fairness, below of=n_alpha] (want_weak) at (8,3) {Want Weak Honesty?};
\node [decision, below of=want_weak] (want_column) at (5,3) {Want Column Property?};
\node [block, left of=want_fairness, below of=want_column] (gm) at (4, 0) {Geometric Mechanism};
\node [block, right of=want_fairness, below of=want_column] (wm) at
(6, 0) {Weak Mechanism (via LP)};
\path [line] (want_fairness) --node {no} (n_alpha);
\path [line] (n_alpha) --node {yes} (want_column);
\path [line] (n_alpha) --node {no} (want_weak);        
\path [line] (want_fairness)--node {yes}(fair) ;
\path [line] (want_weak)--node {yes}(wm) ;
\path [line] (want_weak)--node {no}(want_column) ;
\path [line] (want_column)--node {no}(gm) ;
\path [line] (want_column)--node{yes} (wm) ;
\end{tikzpicture}
}
\end{figure}

\begin{figure}[t]
\centering
{\small
\begin{tabular}{c|cccc}
    Property & \gm & \wm & \E & \UM \\\hline
    Symmetry (S) & \yes & \yes & \yes & \yes\\
    Row Monotone (RM) & \yes & \yes & \yes & \yes\\
    Column Monotone (CM) & --- & --- & \yes & \yes \\
    Fairness (F) & \no & \no & \yes & \yes \\
    Weak Honesty (WH) & --- & \yes & \yes & \yes \\\hline
    \0 & $\frac{2\alpha}{1+\alpha}$
    & $\ge \frac{2\alpha}{1+\alpha}$
    & $\approx \frac{2\alpha}{1+\alpha}\cdot\frac{n+1}{n}$
    & $1$
\end{tabular}}
\caption{Properties of named mechanisms}
  \label{fig:table}
\end{figure}

First, we have shown by Theorem~\ref{thm:fair} that \E has the optimal
\0 score of any fair mechanism and has all other possible properties
``for free''.
Therefore, for any desired set of properties that include F, we can
just use \E.

Second, we have shown by Theorem~\ref{thm:gm_dpbasic} that \gm achieves
symmetry and row monotonicity (and hence row honesty) at a cost which
is optimal for any mechanism (i.e. \basicdp).
{Hence for any subset of $\{$S, RM, RH$\}$, it suffices to use \gm.}

In our experiments (Section~\ref{sec:exptswm}), we show that
there are only two remaining behaviors: either we solve the LP for the
WH property alone, or we solve the LP for WH and CM properties. 
Both solutions come with symmetry (S) and row properties RH, RM at no
additional cost. 
However, as noted in Lemma \ref{lemma:uniformity}, \gm satisfies WH
when $n \ge \frac{2\alpha}{1-\alpha}$, so in this case, we can use
\gm. 
Last, from observations in Section~\ref{sec:properties}, we have that
CM $\Rightarrow$ CH $\Rightarrow$ WH, so any demand that requires
any of these properties (and not F) can be satisfied by \wm also.
But in the weak privacy case that $\alpha \le \frac12$, \gm has these
properties, and so subsumes \wm.

To summarize this reasoning, in the case that $\alpha \le \frac12$,
there are only two competitive mechanisms: \E if fairness is required,
and \gm for all other cases. 
When $\alpha > \frac12$, things are a little more complicated, so we
show a flowchart in Figure~\ref{fig:flowchart}: from 128
possibilities, there are only four distinct approaches to consider
(two explicit mechanisms, and two solutions to an LP with different constraints),
and the choice is determined primarily by whether the mechanism is
required to satisfy fairness, column properties, weak honesty, or none. 
We also consider the baseline method \UM for comparison. 
We present a summary of these four named mechanisms in
Figure~\ref{fig:table}: the explicit \gm, \UM and \E, and \wm which is found by solving an LP. 
We write `---' for a property when this depends on the setting of the
parameters (discussed in the relevant section). 
We see that \E has a very similar objective function value \0
(recalling that we are trying to minimize this value), and all the
properties considered so far.
We do not have a closed form for the \0 score of \wm, as it is found
by solving the LP; however it is no less than that
for \gm (since \gm satisfies a subset of the required properties of
\wm), and no more than that of \E (since \E satisfies all properties).

At this point, we might ask how different are these mechanisms in
practice --- perhaps they are all rather similar?
Figure~\ref{fig:heatmaps} shows this is not the case for a small
group size ($n=4$).
For a moderate value of the privacy parameter $\alpha = 0.9$,
it presents the three non-trivial mechanisms using a
heatmap to highlight where the large entries are.  
We immediately see that \E concentrates probability mass along 
a uniform diagonal (as required by fairness).
Both \gm and \wm tend to favor extreme outputs (0 or 4 in this
example) whatever the input, although \gm is very skewed in this
regard while \wm is more uniform in allowing non-extreme outputs. 

Last, we check that what we are doing is not a trivial modification of
known mechanisms.
Prior work~\cite{Ghosh:2009,Gupte:2010} showed how optimal
unconstrained mechanisms can be derived from \gm by transformations.
Gupte and Sundarajrajan give a simple test: a mechanism $\mech$ can be
derived from \gm iff every set of three adjacent entries in
the mechanism satisfy
\smallskip\newline\centerline{$
(\Pr[i|j] - \alpha \Pr[i|j-1]) \ge \alpha (\Pr[i|j+1] - \alpha \Pr[i|j])
$}\smallskip\newline
We applied this test to mechanisms \wm and verified that this
condition is indeed violated for $n>1$.
For \E, this condition is automatically broken for all $n>1$: we have
$\Pr[2|0]$ $= $ $\Pr[2|1]$ $= y\alpha$, while
$\Pr[2|2]=y$. 
Then the condition is
\smallskip\newline\centerline{$
  y\alpha(1 - \alpha) \ge y\alpha(1 - \alpha^2) \equiv 1 \ge(1+\alpha)
$}\smallskip\newline
which is always false for $\alpha>0$. 
Hence, these mechanisms are not derivable from \gm. 

\begin{figure}[t]
\centering
\includegraphics[width=0.5\textwidth]{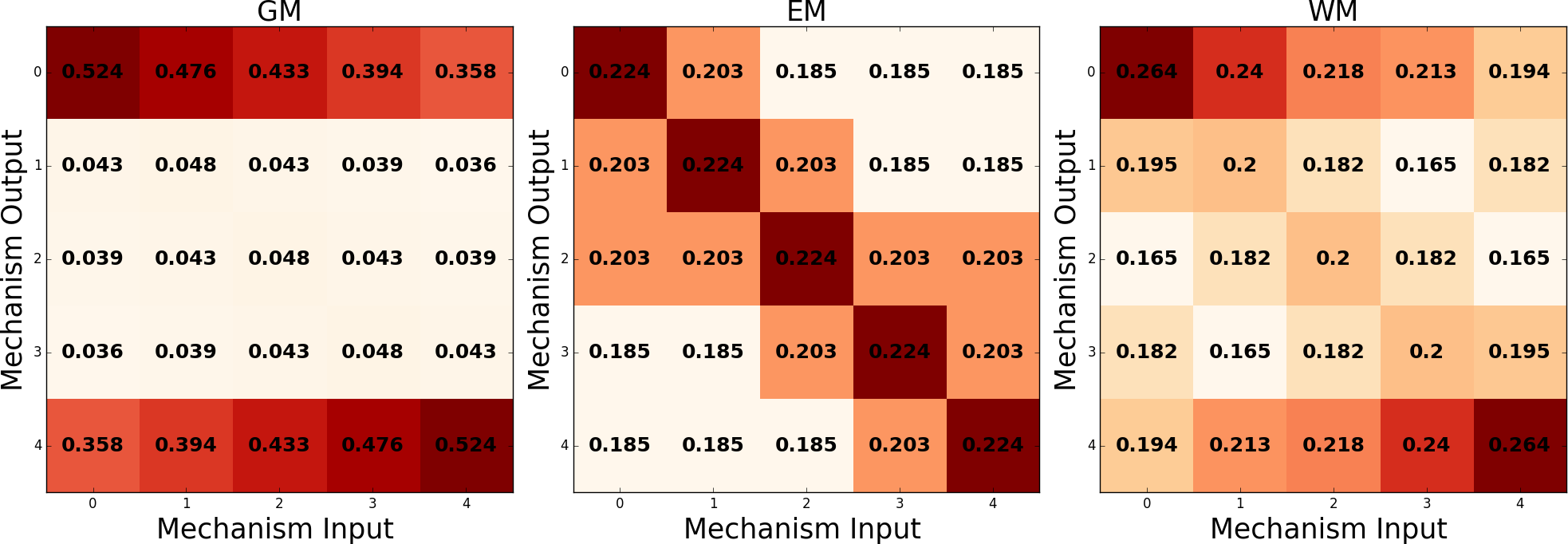}
  \caption{Heatmaps for \gm, \E, \wm with $n=4$}
  \label{fig:heatmaps}
\end{figure}

%

\eat{
\para{Unbiased Estimation Of Input Distribution.}
When evaluating the results of a single output from a group under a
mechanism, the most natural way to proceed is to accept the output as
the estimate for that group.
This particularly makes sense when the mechanism satisfies a property
such as row honesty, when this is the most likely case.
However, when we observe multiple outputs from different groups, we
can be more sophisticated in how we estimate properties of the input
distribution. 

Formally, let $\mathcal{M}$ be the probability distribution matrix of
a mechanism operating on a population of $k$.
Let $I, O$ represent respectively the input distribution of the
population (i.e. the fraction of groups which have values $0, 1, 2, \ldots
n$) and the distribution of observations after running the
mechanism on population.
Then using matrix notation, we have 
$
\mathcal{M}_{k \times k} \times I_{k \times 1}  = O_{k \times 1}.
$

Here, $I$ is unknown but we do know $\mathcal{M}$ and observe $O$.  
Then an unbiased estimator of $I$, $\hat{I}$ is given by Chaudhuri
et al.~in \cite{Chaudhuri:1988} as the solution of  
$
 \hat{I}  = \mathcal{M}^{-1} \times O.
$
 
Note that we do not explicitly enforce that $\hat{I}$ should be a
probability distribution (as is the case for $I$), so treating it as
such could lead to erroneous results.
Nevertheless, we find that this solution often gives good results in
our empirical evaluation. 
Our experimental study will next show how these mechanisms compare in
practice. 
}

\begin{figure*} 
\centering
  \subfigure[Varying group size]{
    \includegraphics[width=0.45\textwidth]%
                    {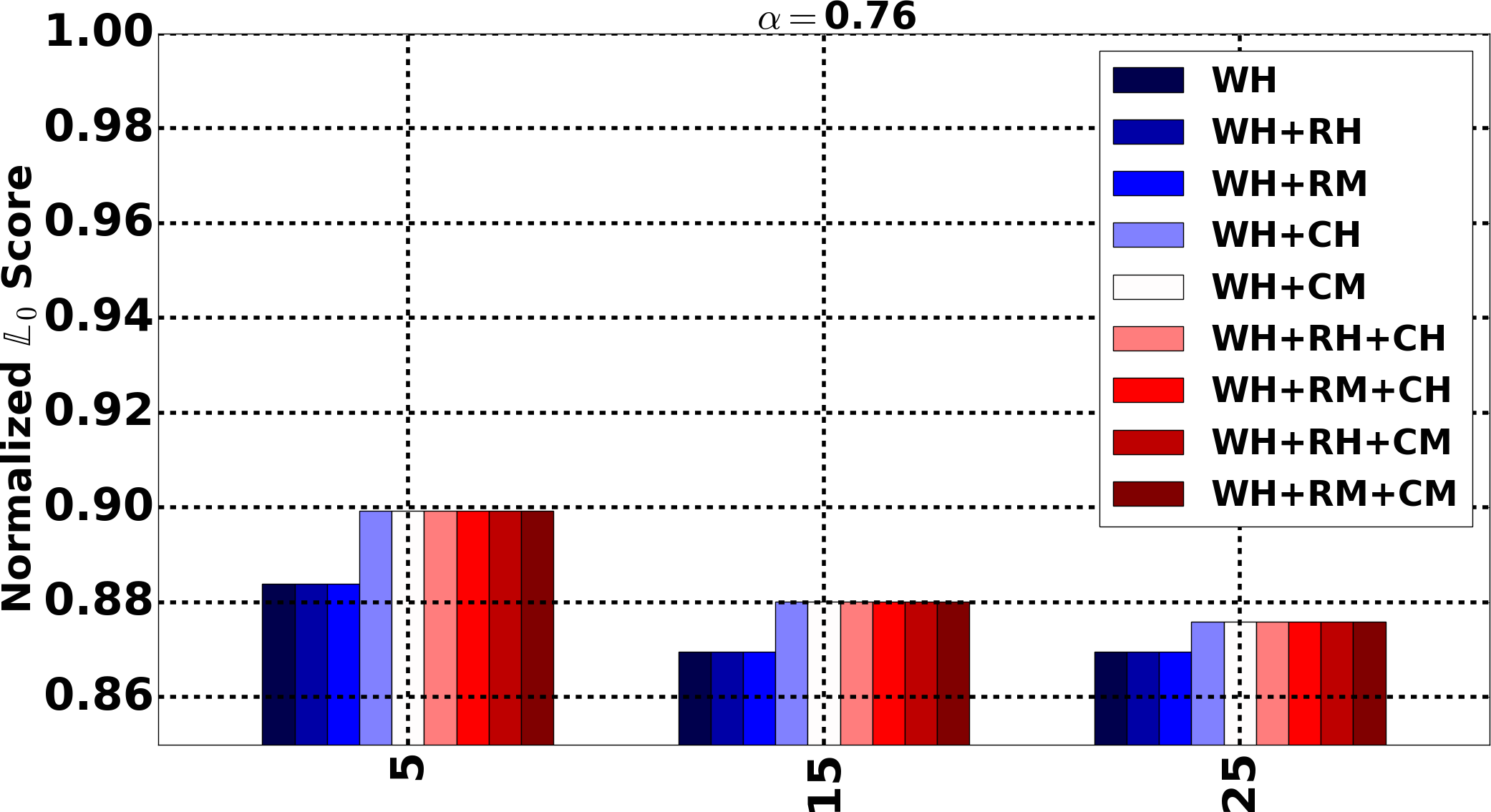}
\label{fig:wmn}
  }%
  \subfigure[Varying $\alpha$]{
    \includegraphics[width=0.45\textwidth]%
    {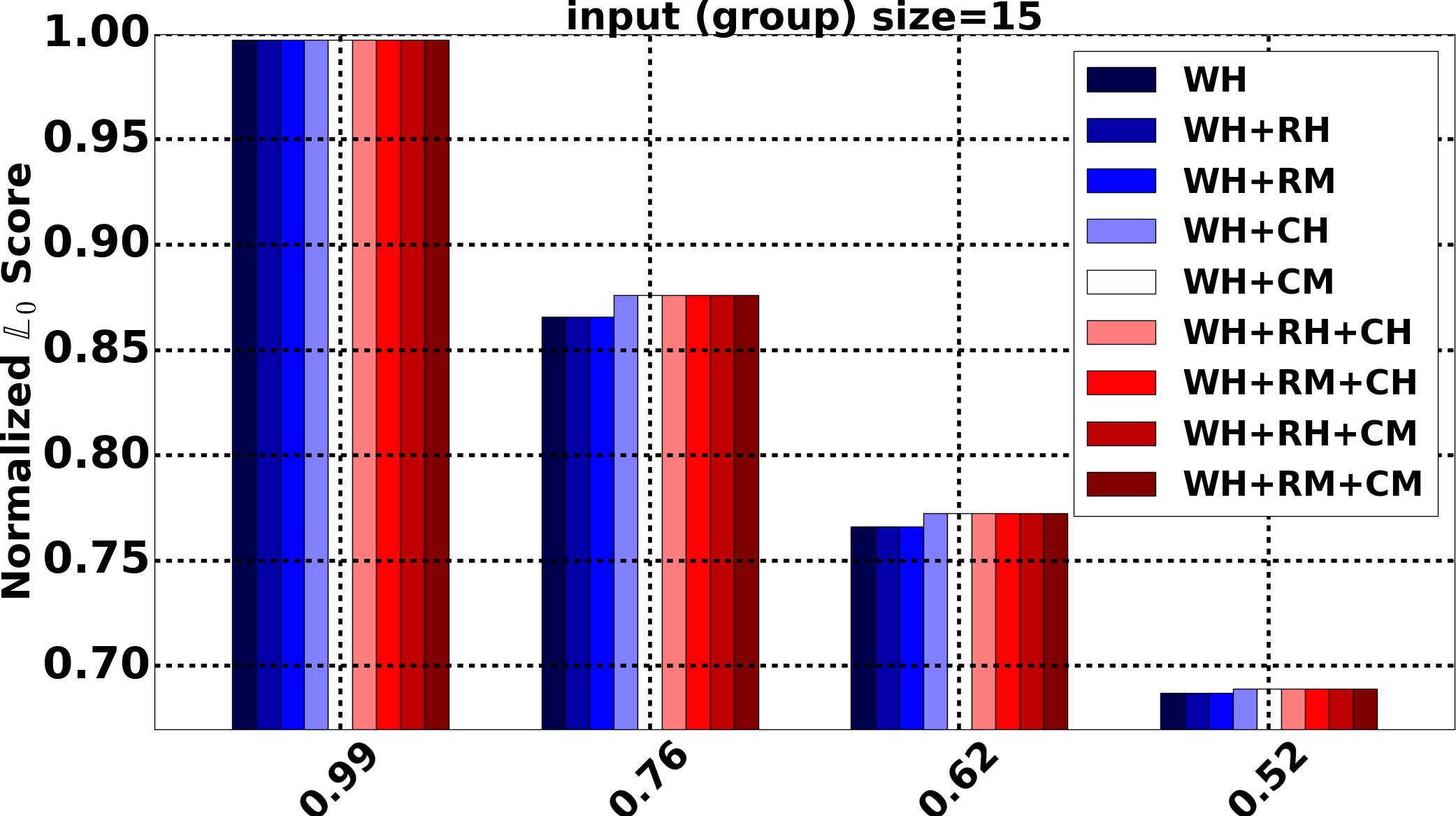}
\label{fig:wma}
  }
\caption{Combinations Of Properties with Weak Honesty}
\label{fig:weakhonesty}
\end{figure*}

\begin{figure*} 
  \subfigure[$\alpha=\frac23$]{
     \includegraphics[width=0.33\textwidth]{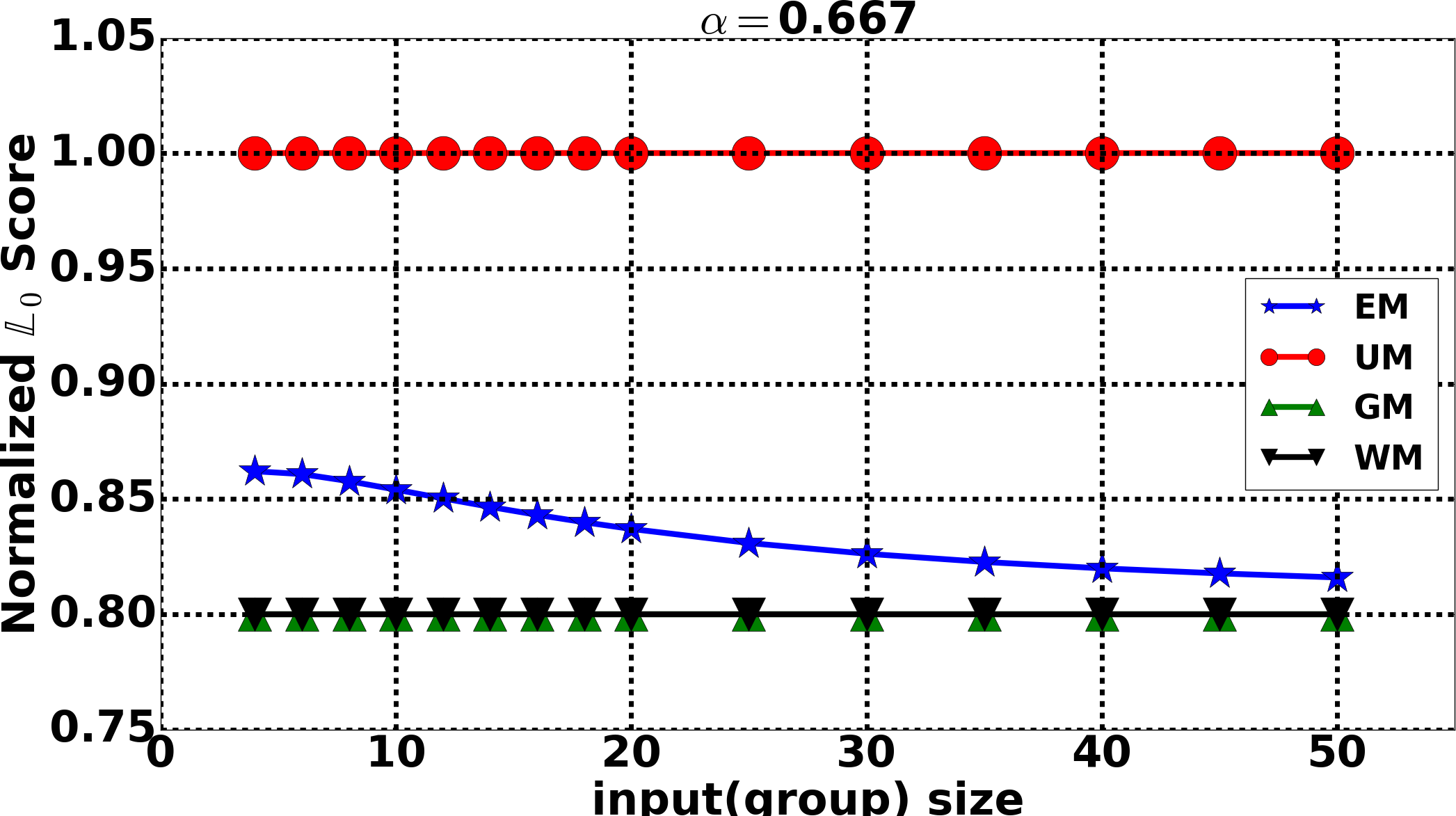}
     \label{fig:l0wm66}
  }%
  \subfigure[$\alpha=\frac{10}{11}$]{
     \includegraphics[width=0.33\textwidth]{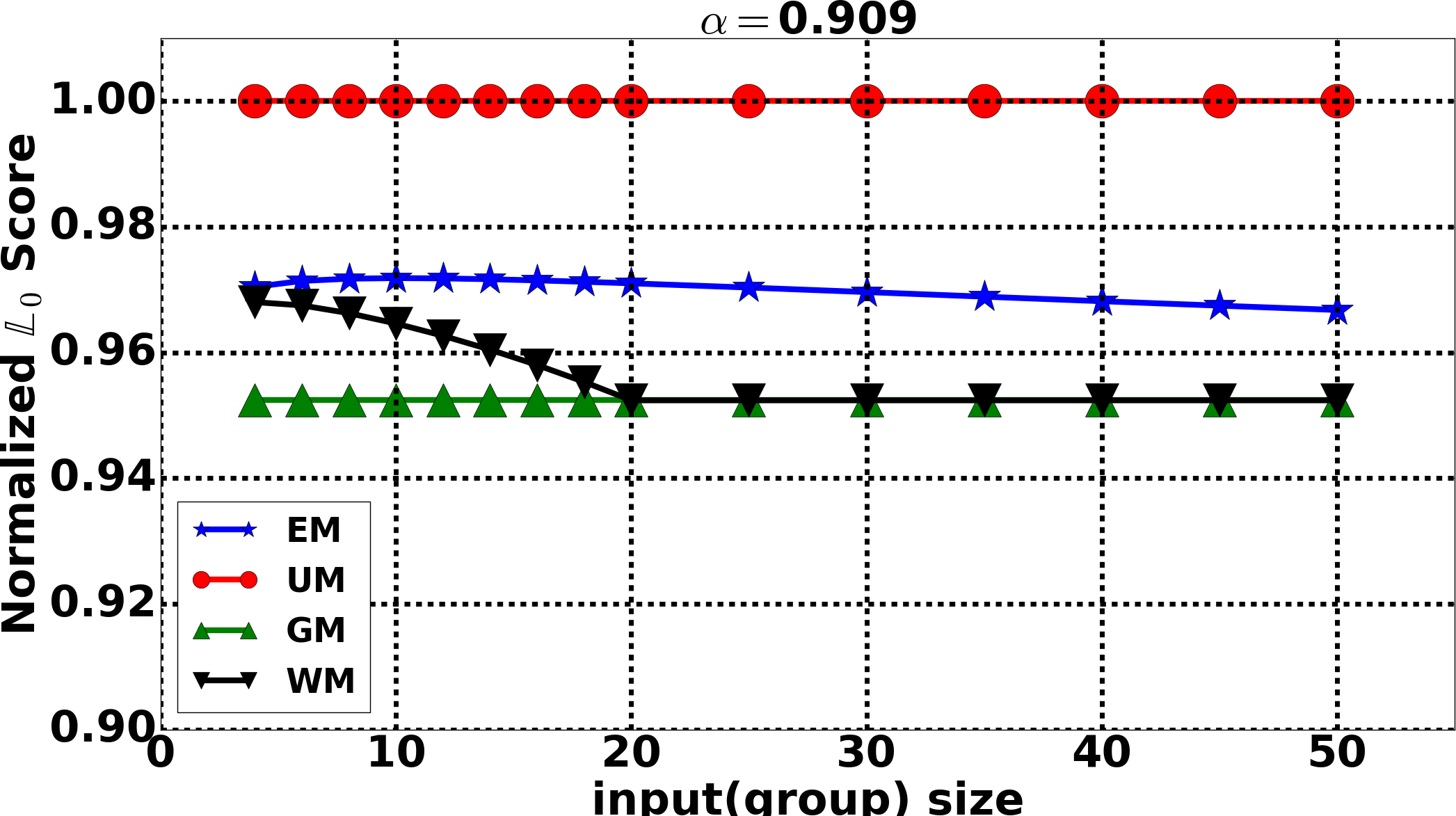}
     \label{fig:l0wm90}
  }%
  \subfigure[$\alpha=\frac{99}{100}$]{
     \includegraphics[width=0.33\textwidth]{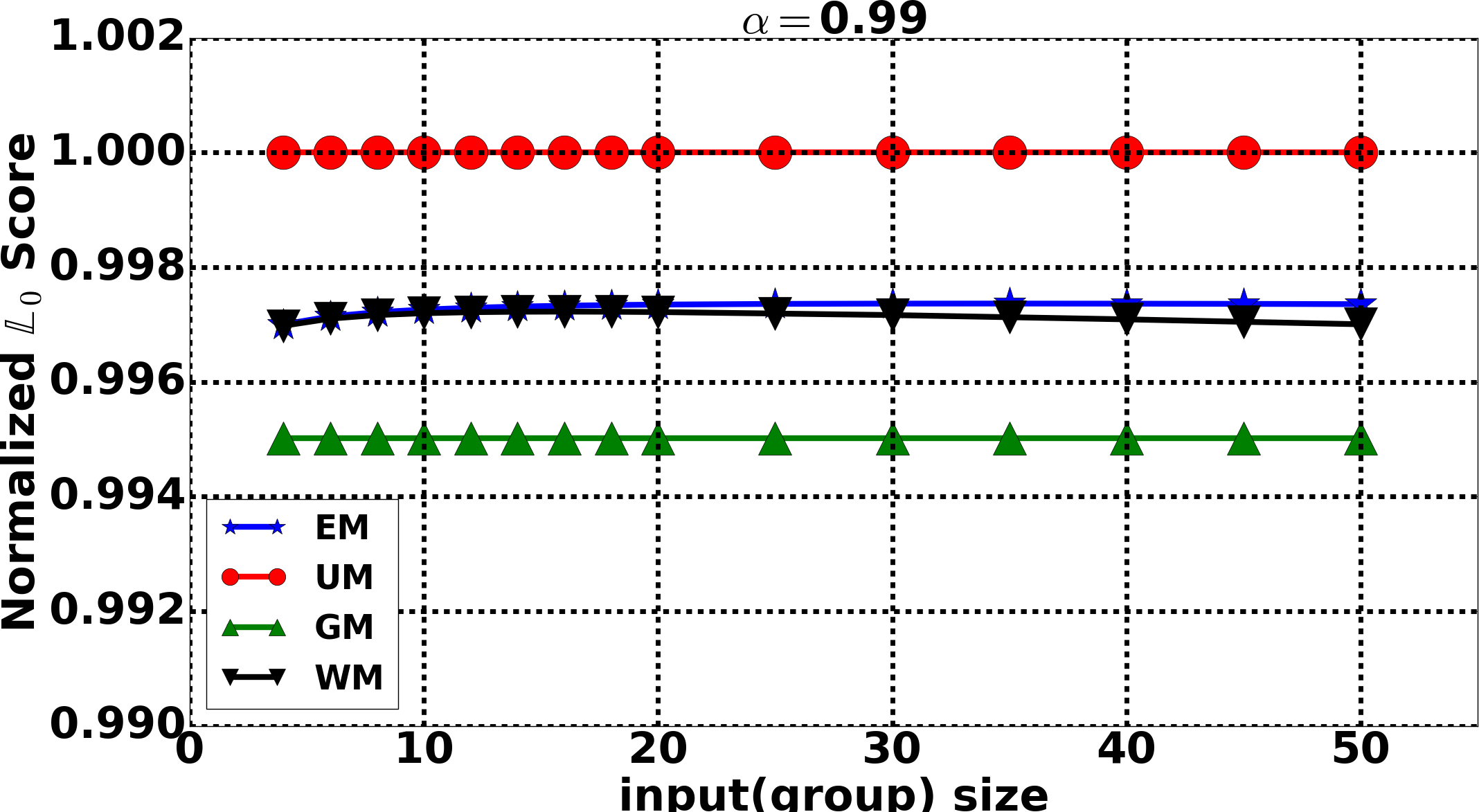}
     \label{fig:l0wm99}
  }
\caption{Final Groups Of Mechanisms with Distinct Behaviors}
\label{fig:l0wm}
\end{figure*}

\section{Experimental Study}
\label{sec:expts}

In Section~\ref{sec:exptswm}, we substantiate our earlier claims about
properties of mechanisms satisfying weak honesty (but not fairness). 
In what follows, we look at
other measures of utility of these mechanisms, to understand their
robustness. 


\para{Default Experimental Settings.} All experiments in this work were implemented in Python, making use of the standard library NumPy to handle
the linear algebraic calculations, and PyLPSolve \cite{pylpsolve} to solve the generated LPs. Evaluation was made on a commodity machine running Linux. We omit detailed timing measurements, as the time to solve the LPs generated was negligible (sub-second).

\begin{figure*}[t]
\centering
  \subfigure[Estimating young population]{
        {\includegraphics[clip,width=0.33\textwidth]{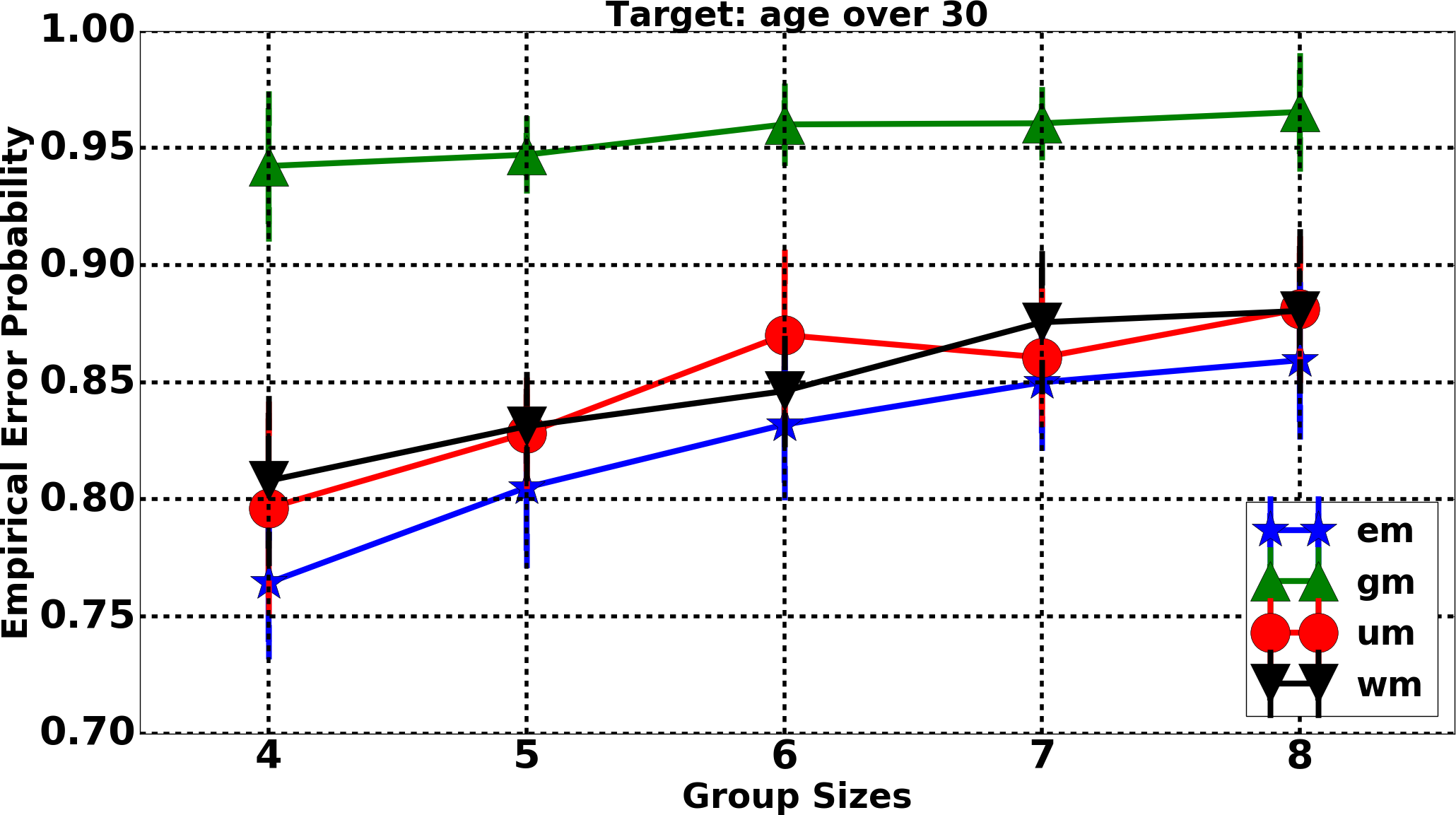}}
    \label{fig:adulteducation}
  }%
  \subfigure[Estimating gender balance]{
     \includegraphics[clip,width=0.33\textwidth]{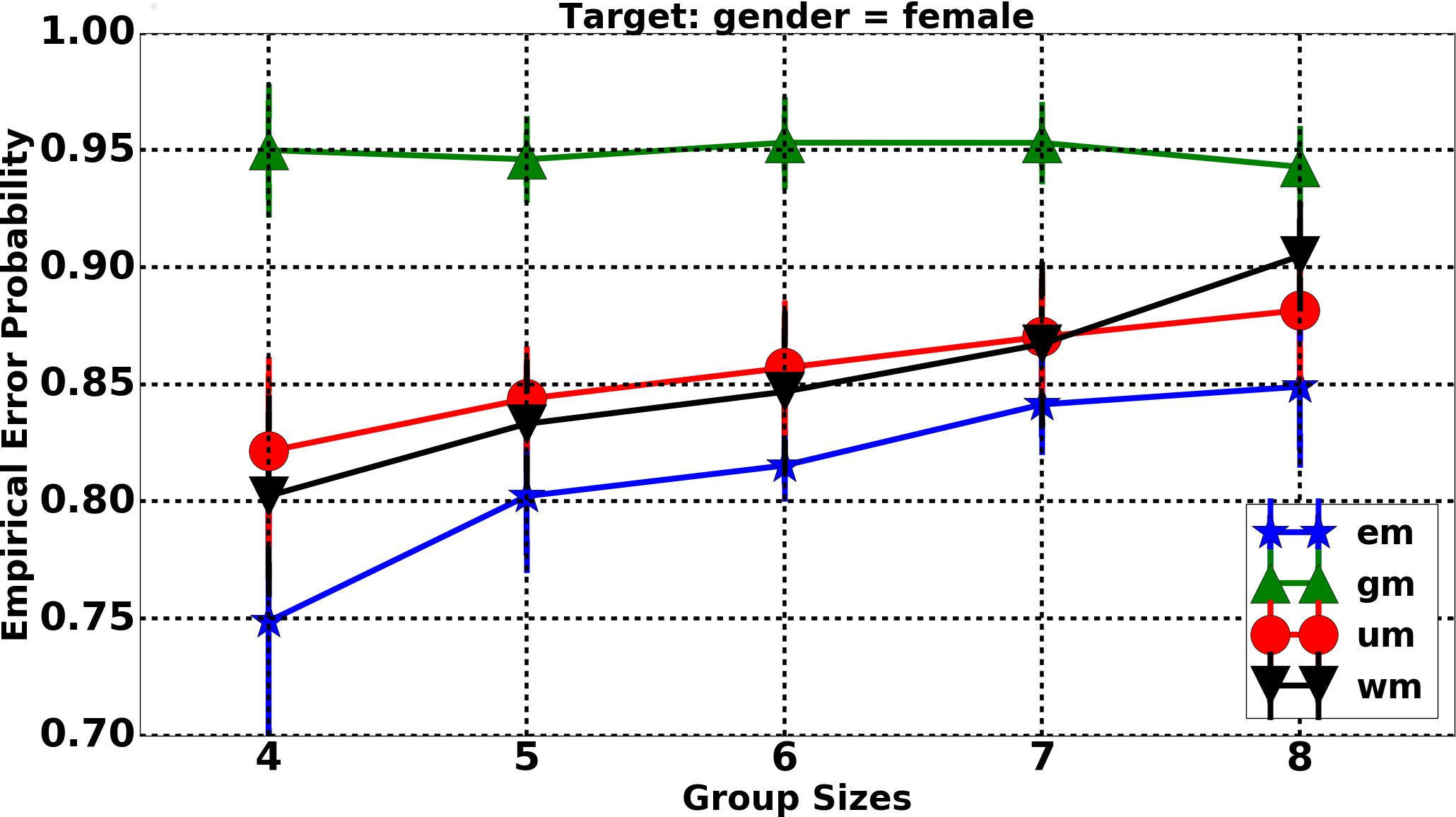}
    \label{fig:adultsex}
  }%
  \subfigure[Estimating income level]{
        \includegraphics[clip,width=0.33\textwidth]{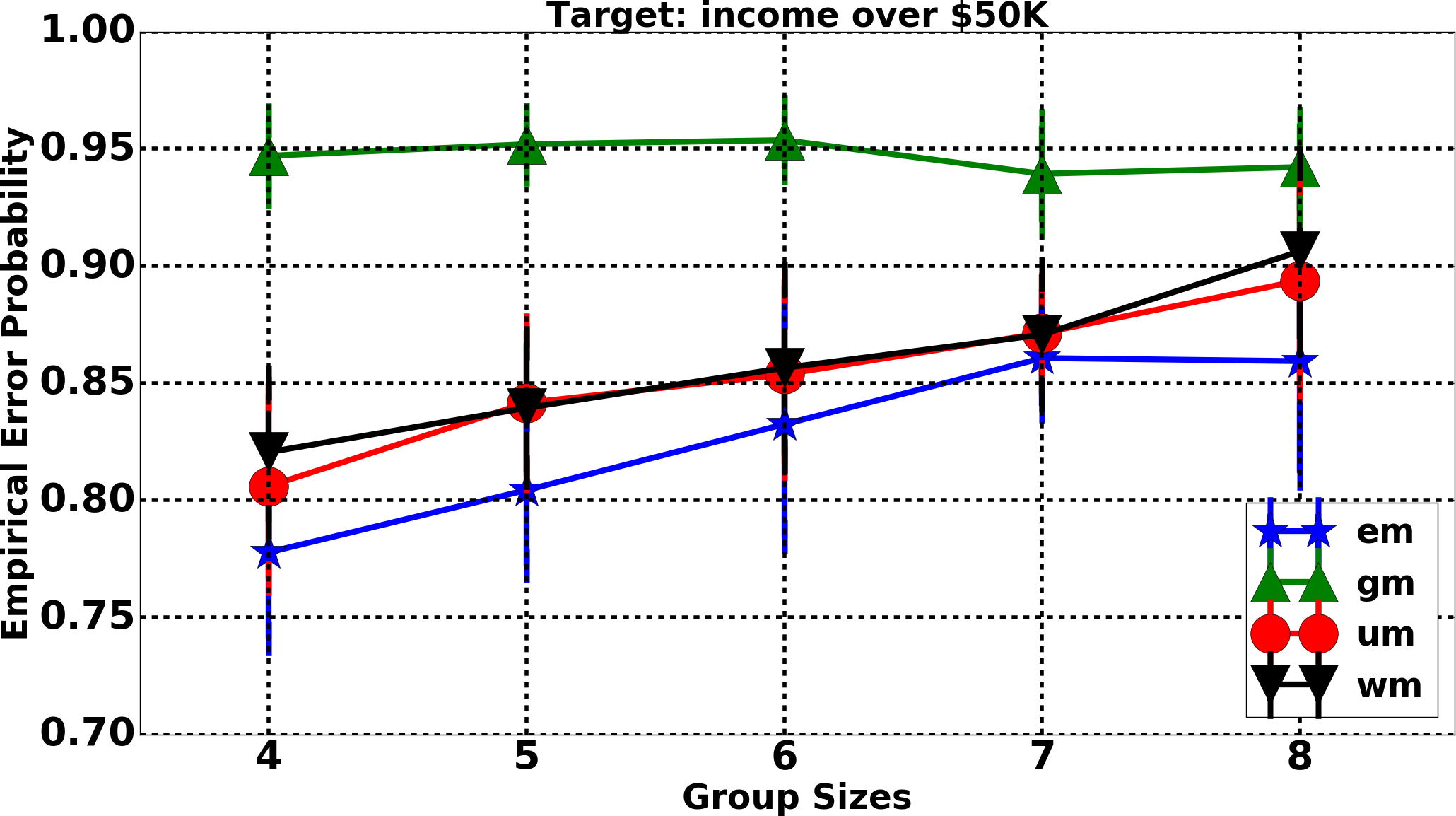}

    \label{fig:adultrich}
  }
\caption{Empirical Error Probability on Adult Dataset for $\alpha=0.9$}
 \label{fig:l0_scores_adult}
\end{figure*}

\begin{figure*}[t]
\centering
  \includegraphics[width=0.85\textwidth]{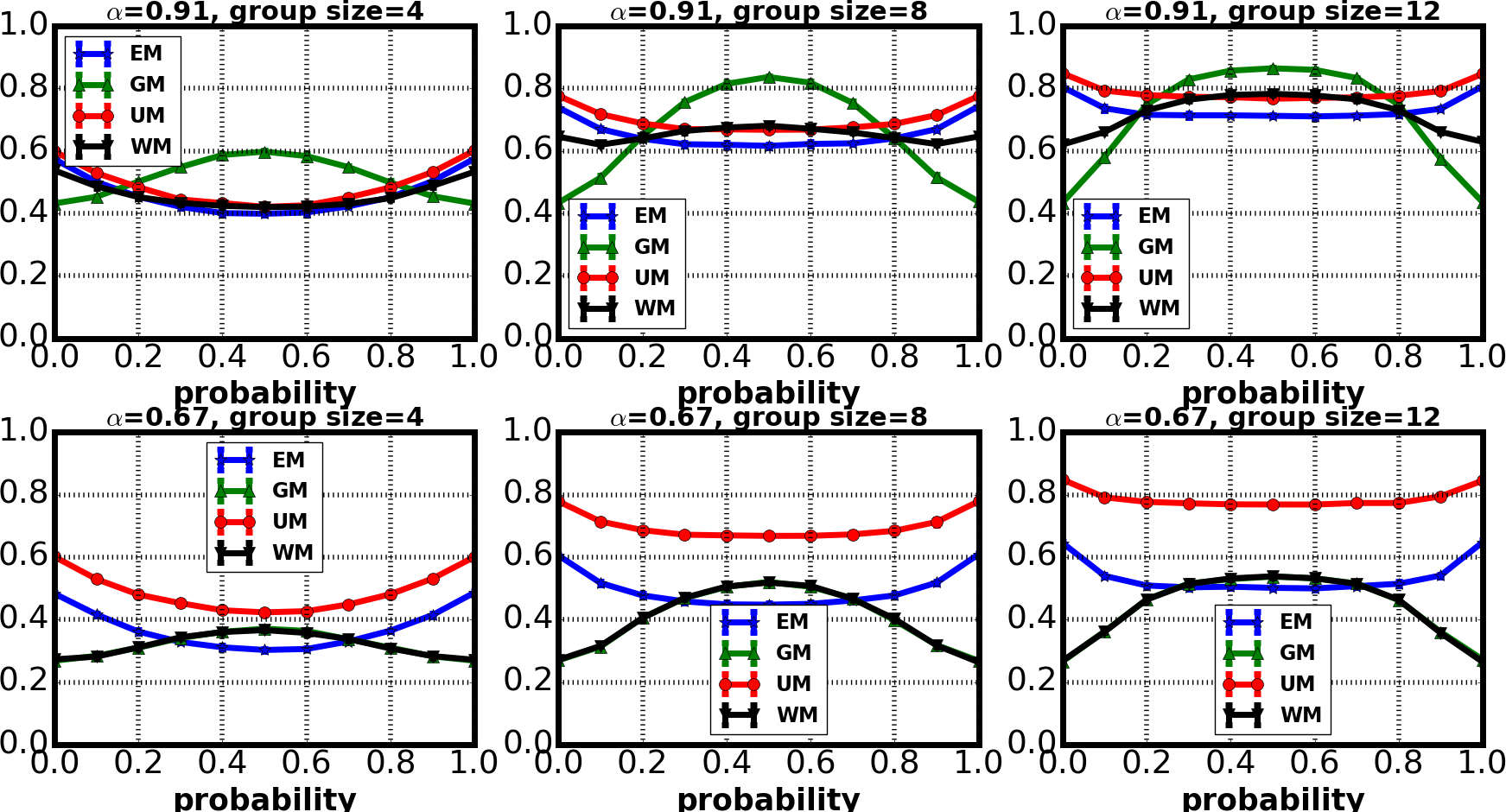}	
  \caption{\Ld{1} score for Binomial data, for $n=\{4, 8, 12\}$ and $\alpha= \{0.91,0.67\}$}
  \label{fig:l_0_1}
\end{figure*}

\para{Experimental Setting.} We considered a variety of settings of parameter
$\alpha$ (typical values chosen are $\{\frac12, \frac23, \frac{10}{11}, \frac{99}{100}\}$
and group size $n$ (ranging from 2 up to hundreds).

\subsection{\0 Objective Function}
\label{sec:exptswm}

Our first experiment analyzes the effect of
weak honesty combined with other properties drawn from $\{$CH,
CM, RH, RM$\}$,
including the empty set. 
There are 9 meaningful combinations of properties to ask for, which we
write as  $\{\emptyset,$ RH, RM, CH, CM, RH+CH, RH+CM, RM+CH,
RM+CM$\}$ --- other combinations reduce to these, since RM implies RH,
and CM implies CH. 

As discussed in Section \ref{sec:comparing}, there
are cases when the solution found by solving the LP has cost
$\frac{2\alpha}{1+\alpha}$ and is identical to \gm: these are when
$n \ge \frac{2\alpha}{1-\alpha}$ and only row-wise properties are
requested, consistent with Figure \ref{fig:flowchart}.
This is borne out in Figure \ref{fig:weakhonesty}:
we see that when WH alone is requested, or in combination with only row
properties (RH or RM) we get a lower \0 value than when any column
properties (CH or CM) are requested.
Figure~\ref{fig:wmn} shows the case for different values of $n$.
When $n > \frac{2\alpha}{1-\alpha}$, which is 6.33 in this example (where $\alpha=0.76$),
the cost of WH alone is $\frac{2\alpha}{1+\alpha} = 0.864$, the cost
of \gm.
For large $\alpha$ (Figure~\ref{fig:wma}), the cost of all
combinations of WH are the same, and identical to the cost of \E;
as $\alpha$ is decreased, we see two behaviors, where the lower \0 cost
is that of \gm. 
We confirmed this behavior for a wide range of $n$ and $\alpha$
values. 
From now on, we use \wm to refer to the mechanism with WH, RM and CM properties. 


The relationship between the \0 scores for the three mechanisms is
further clarified in Figure~\ref{fig:l0wm}. 
The plots show the \0 scores of \gm, \wm, \E and \UM for different
values of $\alpha$.
In Figure \ref{fig:l0wm66}, $\alpha=\frac23$ so the threshold
$\frac{2\alpha}{1-\alpha} = 4$.
Then \gm satisfies WH for the whole range of $n$ values shown, so \wm
converges on \gm, while \E has a higher (but decreasing) cost.
For Figure \ref{fig:l0wm90}, $\alpha=10/11$ so the threshold is 20.
Indeed, we see that the cost of \wm converges with \gm at $n=20$.
Last, in Figure \ref{fig:l0wm99}, the threshold of 198 is far
above the range of $n$ values shown, so \wm does not converge on \gm
here.
Rather, for this high value of $\alpha$, the $y$ value
for \E is above $\frac{1}{n+1}$ for all $n$: so in this case \E has
weak honesty, and the cost of \wm remains the same as that of the
optimal fair \E. 


\begin{figure}[t]
 \includegraphics[width=0.5\textwidth]{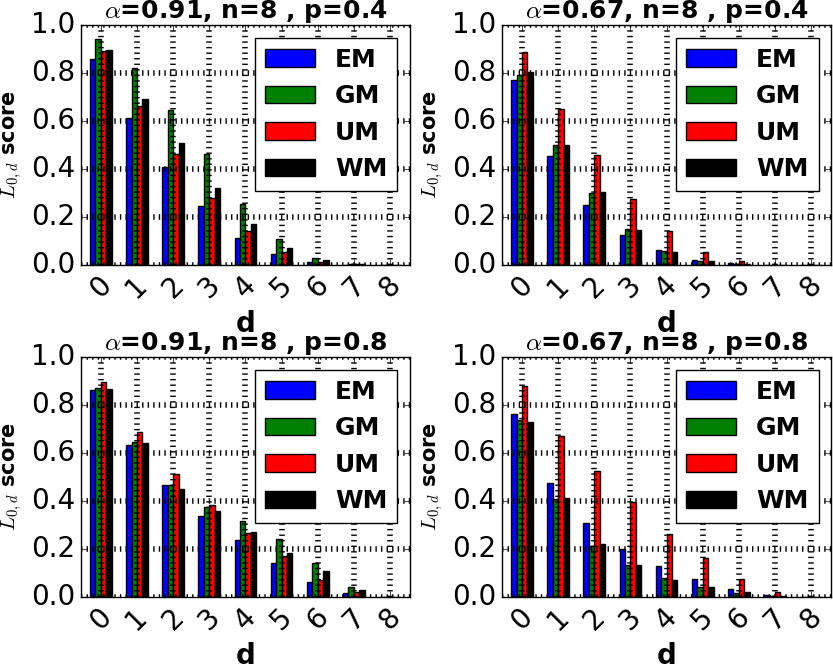} 
  \caption{Histograms of \Ld{d} scores for Binomial data}    
 \label{fig:histogram_plots}
  \end{figure}

\begin{figure*}
\centering
  \includegraphics[width=0.9\textwidth]{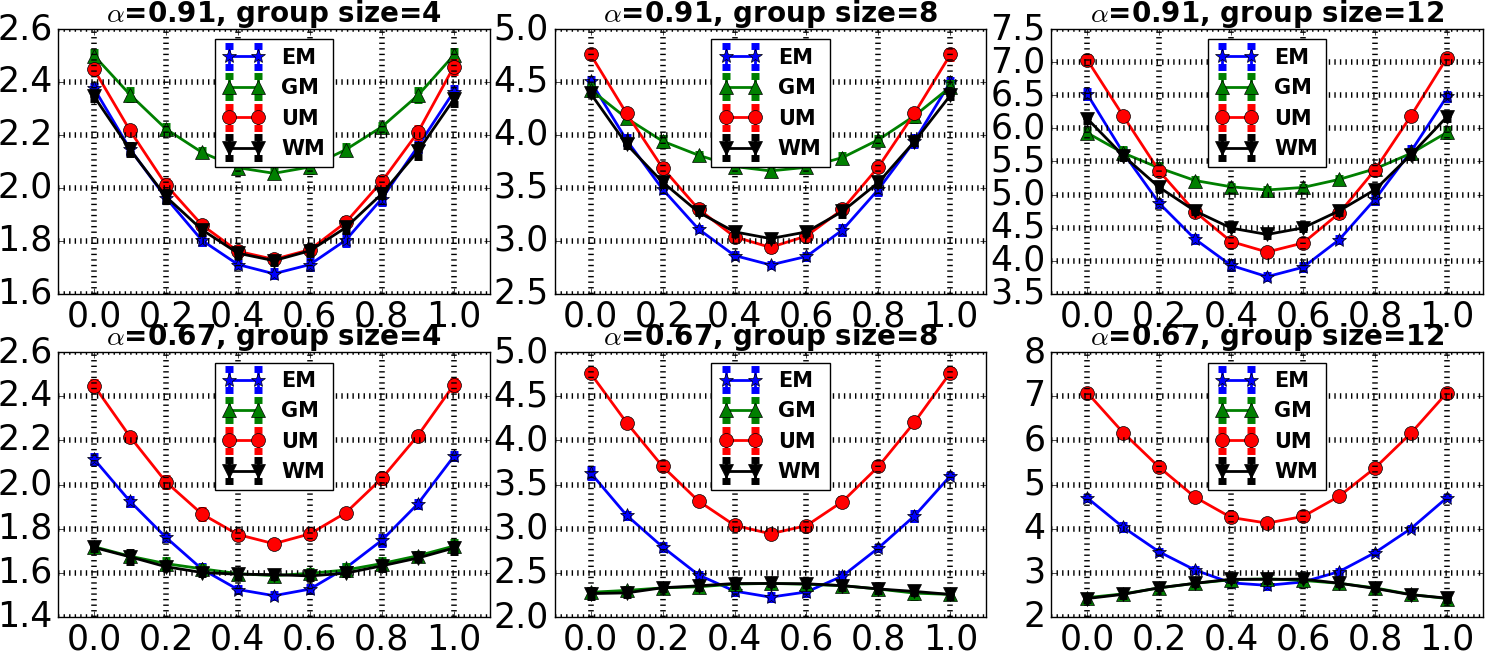}
  
  \caption{Root Mean Square Error Plots For Binomial data}
  \label{fig:rmse}
\end{figure*} 

\subsection{Experiments On Real Data}

We make use of the UCI Adult dataset, a workhorse for privacy
experiments \cite{Blake:1998}.
Our instance of the dataset contains demographic information on 
32K adults with 15 columns listing age, job type, education,
relationship status, gender, and (binary) income level.
We created three binary targets, treated as sensitive:
income level (high/low), gender (male/female), and young (age over/under $30$).
To form small groups, we gathered the rows (corresponding to individuals)
arbitrarily into groups of a desired size.

Figure~\ref{fig:l0_scores_adult} shows results for the \0 objective,
that is, where we focus on the fraction of times the mechanism reports
an incorrect answer, as a function of group size.
Specifically, we count the number of groups whose noisy count for each target
attribute is not equal to their true count.
We expect this quantity to be fairly high, as it measures how often
our mechanism is honest, i.e. returns the true input. 
Other experiments (not shown) computed the corresponding probability
for returning an answer that is close to the true one, e.g. off by at
most one, and showed similar patterns. 
The plot includes error bars from $50$ repetitions of this process to show
$1$ standard error.

Observe first that the performance of \UM is essentially independent
of the input data: the chance of it picking the correct answer is
always $1 - \frac{1}{n+1}$ for a group of size $n$, and indeed we see this
behavior (up to random variation).
We would hope that our optimized mechanisms can outperform this trivial method.
Perhaps surprisingly, on this data \gm does appreciably worse.
This highlights the limitations of \gm.
In this data, the common inputs are around the middle of the group
size (i.e. typically close to $n/2$).
It is on these inputs that \gm does poorly, and only does well for
inputs that are 0 or $n$, which happen to be rare in this dataset
(in other words, the data distribution does not match the prior for
which \gm is optimal).
The condition of weak honesty is not sufficient to improve
significantly over random guessing: for this data, we see that \wm
tracks \UM quite closely.
It is only the most constrained mechanism that fares better on this
evaluation metric for this data: \E which achieves fairness gives the
best probability of returning the unperturbed input. 

In corresponding experiments with higher values of $\alpha$ in the
range $0.9$ to $0.99$, corresponding to the strongest privacy
guarantees adopted in prior work on differential privacy,
there is not much to choose between \E and
\wm, and it gets even harder to show substantial improvement over
uniform guessing. 
In order to understand the behaviors of the mechanisms further,
we next consider synthetic data,
where we can directly control the data skewness within groups.

\subsection{Experiments On Synthetic Data}

In our experiments with synthetic data we generate a population of
$10,000$ individuals with a private bit and divide them into small
groups of the same size, $n$.
Each individual has the same probability $p$ of having their bit be
one, so the distribution within each group is Binomial. 
Hence, the expected count for each group is $pn$. 
Our experiments vary the parameters $p$, $n$ and $\alpha$. 


\para{\Ld{1} Error.}
Our experiments so far have uesd the target objective function \0 to
evaluate the quality of the mechanism.
This is sufficient to distinguish the different mechanisms, but all
mechanisms achieve a score which can be close to 1, obtained by
uniform guessing.
To better demonstrate the usefulness of the obtained mechanisms, we
use other functions to evaluate their accuracy. 
Figure~\ref{fig:l_0_1}
uses the related measure of $\mathbb{L}_{0,1}$ i.e. the fraction of
groups which output a value differing from their true answer by more than $1$,
as we vary data distribution (determined by $p$), group size $n$, and
privacy parameter $\alpha$.
We stress that though we use $\Ld{1}$ for evaluation,
we continue to use mechanisms designed for minimizing the $\0$ error.
Each subplot in the figure represents a configuration of $\langle
\alpha,n \rangle$, describing how $\Ld{1}$ error changes with input
distribution parameter $p$.
Each experiment is repeated $30$ times and we observe that the
results have very small variance.

It is apparent that the shape of the input distribution has a pronounced
effect on the quality of the output.
We confirm that \gm can do well when the input is very biased ($p$
close to 0 or 1), which generates more instances with extreme input
values. 
However, when the input is more spread across the input space, the
more constrained mechanisms consistently give better results. 
For higher $\alpha$, the constrained methods have similar behavior,
and improve only slightly over \UM (while \gm is often worse than
uniform). 
Enforcing fairness tends to make \E less sensitive to the input
distribution, except when the input is an extreme value (0 or $n$). 
When $\alpha$ is lower (second row), the overall scale of error
decreases and \wm and \gm converge, as noted previously. 

    
\para{$\Ld{d}$ Error.}
In the previous experiment, we fixed $d=1$ and evaluated our
mechanisms for variety of input distributions. Next we vary $d$ while
holding input size and input distributions and compute $\Ld{d}$ error.
Figure~\ref{fig:histogram_plots} plots the fraction of population
reporting a value that is more than $d$ steps away from the true
answer for various $d$ values with $n=8$.
This captures the probability mass in the tail of each mechanism.  

In the top row, we use a more proportionate input distribution.
Here, \E outperforms all other mechanisms, sometimes by a substantial
fraction. 
Interestingly, the margin between \E and \gm only increases with
larger $d$.
Once again we see that for higher $\alpha$ values, use of \gm can
yield accuracy worse than mere random guessing.
For lower $\alpha$'s \gm's accuracy increases dramatically but still remains worse than \E 's.
  
In the bottom row, the input distribution is more skewed, which tends
to favor \gm.
However, \E does not do substantially worse than \gm even for this
biased input distribution. 
The intermediate mechanism found by \wm tends to fall between \gm and
\E.
We observed similar behavior for other values of $n$. 

\para{Root Mean Square Error (RMSE).}
Our final set of experiments compute the RMSE error (a measure of variance and bias of a mechanism) of estimates from
small groups. 
Note that none of our mechanisms are designed to optimize this metric,
but we can nevertheless use it as a measure of the overall spread of error. 
Figure~\ref{fig:rmse} shows plots with error bars showing one standard
deviation from $30$ repetitions. 

As seen in previous experiments, a more symmetric input distribution
($p$ closer to 0.5) tends to be easier for most mechanisms --- although
we see cases where \gm finds this more difficult. 
Increasing the group size increases the RMSE, as there is a wider
range of possible outputs, and the constraints ensure that there is
some probability of producing each possible.
Yet again, we see that increasing $\alpha$ tends to make \gm less competitive and find many cases where \gm is worse than random guessing (\UM).
The interesting case may be for fairly high privacy requirements ($\alpha = 0.91$),
where we observe that \E tends to give lower error across all group
sizes and input distributions.

\section{Concluding Remarks}
We have proposed and studied several structural properties for
privacy preserving mechanisms for count queries.
We show how any combination of desired properties can be provided
optimally under \0 by one of a few distinct mechanisms.
Our experiments show that the ``optimal'' \gm often displays the undesirable
property of tending to output extreme values (0 or $n$).
In practice, this means it is often not the mechanism of
choice, particuarly when $\alpha$ is large (above 0.7), but can be
acceptable for smaller privacy parameters.
\E and \wm are  quite different in structure, but are often
similar in performance. 	

It is natural to consider other possible properties---for example,
one could imagine taking a version of the DP constraint applied to
columns of the mechanism (in addition to the rows): this would 
enforce that the ratio of probabilities
between neighboring {\em outputs} is bounded, as well as that of
neighboring inputs. 
The next logical direction is to provide a deeper study of mechanisms
with various properties using \1 or \2 as objective function.
It will be interesting to study tailor-made linear programming
mechanisms that aim to optimize other queries such as range queries.

\bibliographystyle{IEEEtran}
\bibliography{papers}

\appendix
\allowdisplaybreaks

\begin{IEEEproof}[Proof of Theorem~\ref{thm:symm_mechanism}]
  Our construction to achieve symmetry is simple.
  Define a matrix $M^{S}$ from $M$ as $(M^{S})_{i,j} = M_{n-i,n-j}$.
  Then set $M^{*} = \frac12(M + M^{S})$.
We first observe that $M^*$ is indeed symmetric, since it is equal to 

\centerline{$
  \frac12(M_{i,j} + M_{n-i, n-j}) 
  =   \frac12(M_{n-i, n-j} + M_{n - (n - i), n - (n -  j)}) 
  =  M^*_{n-i, n-j}
$}
\noindent
as required by \eqref{eq:symmetry}.
  The (\0) objective function value is unchanged since (invoking~\eqref{eq:l0cost})
  \[ \textstyle \trace(M^*) = \frac12(\trace(M) + \trace(M^{S})) = \trace(M)\]
  For the other diagonal properties (fairness and weak honesty),
  it is immediate that if either of these properties are satisfied by
  $M$, then they are also satisfied by $M^*$. 
%
  We prove the claim for row properties; the case for column
  properties is symmetric.

\noindent
  (i) Differential privacy: if we have $\alpha \leq
  {M_{i,j}}/{M_{i,j+1}} \leq 1/\alpha$ for all $i,j$, then this
  also holds for $M^{S}_{i,j}/M^{S}_{i,j+1}$.
  Summing both in\-eq\-ualities,
and using that
  $\min{(\frac{a}{b},\frac{c}{d})} \leq \frac{a+c}{b+d} \leq \max{(\frac{a}{b},\frac{c}{d})}$, 
this holds for $M^*$, hence $M^*$
  satisfies differential privacy. 

\noindent
  (ii) Row monotonicity:
  consider a pair $i, j$ with $1 \leq i \leq j$.
  Then we have $M_{j,i-1} \leq M_{j,i}$ (from~\eqref{eq:rowmonotone}).
  It is also the case that
  $n-j \leq n-i < n$, which means that
  $M_{n-j,n-i+1} \leq M_{n-j,n-i}$ (also
  from~\eqref{eq:rowmonotone}).
  Then $M^{S}_{j,i-1} \leq M^{S}_{j,i}$.
  Combining these two inequalities, we have that $M^*_{j,i-1} \leq
  M^*_{j,i}$.  

  (iii) Row honesty: if $\forall i,j.  M_{i,i} \geq M_{i,j}$, then
  $M^{S}_{i,i} \geq M^{S}_{i,j}$ also.
  Summing both in\-eq\-ualities, we obtain $M^*_{i,i} \geq M^*_{i,j}$
  as required. 
\end{IEEEproof}

\eat{
\begin{proof}[Proof of Theorem~\ref{thm:gm_dpbasic}]
In order to prove the theorem, we define a modified form of a
mechanism which is row monotone and in which all the DP inequalities
are tight.
Given a mechanism $\mech$ whose leading diagonal is
$y = [ y_0, y_1, \ldots y_n]$, define $\mech'$ as
the unique row monotone matrix where all the DP inequalities are
tight.
That is, 
\[\mech' = 
 \begin{pmatrix}
  \mathbf{y_0} & y_0\alpha  & y_0\alpha^{2}&y_0\alpha^{3}& \cdots &y_0\alpha^{n} \\	
 y_1 \alpha & \mathbf{y_1} & y_1 \alpha  & y_1\alpha^{2} &\cdots &  y_1\alpha^{n-1} \\
 y_2 \alpha^{2} & y_2\alpha & \mathbf{y_2} & y_2\alpha &\cdots & y_2\alpha^{n-2} \\
  y_3 \alpha^{3} & y_3\alpha^{2} & y_3\alpha & \mathbf{y_3} & \cdots &y_3\alpha^{n-3} \\
  \vdots  & \vdots  & \vdots & \vdots & \ddots & \vdots \\
  y_n\alpha^{n} & y_n\alpha^{n-1} & y_n\alpha^{n-2} & y_n\alpha^{n-3} & \cdots &	\mathbf{y_n} 
 \end{pmatrix}\]

Note that $\mech'$ is dominated by $\mech$, in the sense
that $\mech'_{i,j}  \leq \mech_{i,j}$ for all $i$ and $j$.
This holds because, given $y_i$, the DP constraints enforce that
$\mech_{i,j}$ cannot be less than $y_i \alpha^{|i-j|}$, which is
exactly the value of $\mech'_{i,j}$.

However, $\mech$ is not strictly a mechanism, since it is not
guaranteed to be column stochastic: columns may sum to less than one.
To address this, we define a `slack vector' $s$ so that
$s_j = 1 - \sum_{i=0}^{n} \mech'_{i,j}$. 
In finding an optimal mechanism $\mech$, we seek to maximize
$\trace(\mech)$ (from~\eqref{eq:l0cost}).
Since $\trace(\mech) = \trace(\mech')$ by definition,
we can concentrate on $\mech'$ and seek to maximize its trace.

We interpret the slack variables $s$ as ``missed potential''.
Observe that each $s_j$ represents probability mass that could (perhaps) be
added to $\mech_{j,j}$ to increase the trace.
Therefore, in order to maximize the trace, we seek to minimize the
slack.
Note that for any given slack vector $s$ and parameter $\alpha$, there
is at most one mechanism $\mech'$ whose slack vector is $s$:
there are $n+1$ unknowns $y_j$, and $n+1$ constraints relating these
to $s$. 
Specifically, let $A(\alpha)$ be the Toeplitz matrix such that
$A(\alpha)_{i,j} = \alpha^{|i-j|}$.
Then given $\alpha$ and $s$, we seek the solution $y$ to
$A(\alpha)y = \mathbf{1}_{n+1} - s$, where $\mathbf{1}_{n+1}$ is the
$n+1$ length vector whose every entry is 1.

We now show that there exists a feasible solution to this system with
$s = 0$, that is with no slack values.
In this case, $\mech = \mech'$ and is optimal as there is no remaining
slack potential that could increase the trace. 

\eat{
Before we prove theorem~\ref{thm:gm_dpbasic}, we prove various
properties of  \basicdp.

Let \opt be the solution of \basicdp for \0 as an objective function.
We split the proof into two lemmas. In
Lemma~\ref{lemma:tightinequality} we prove that setting all
inequalities tight is the optimal strategy to minimize
\0.
In Lemma~\ref{lemma:samediagonals}, we show that
when DP inequalities are tight, there is a unique optimal setting of the
diagonal elements in the mechanism.
These two together are sufficient to conclude the theorem. 


\begin{lemma}
\label{lemma:tightinequality}
Any optimal mechanism for \basicdp under the \0 objective function has
all DP inequalities tight. 
\end{lemma}

\begin{proof} 
We first show that any optimal mechanism $M$ for the \0 objective function
must be row monotone.
Suppose to the contrary that there is some $i \neq j$ for which
$\Pr[i| j] > \Pr[i|i]$, contradicting the row monotonicity property.
Then we can modify $M$, decreasing $\Pr[i|j]$ by $\Delta = \Pr[i|j]
- \Pr[i|i]$, and adding $\Delta$ onto entry $\Pr[j|j]$.
This ensures that column $j$ still sums to 1, and improves the
objective function, contradicting the assumption that $M$ was
optimal.
Hence, we can assume that any optimal mechanism for \0 is row
monotone. 

{\todo I think we need to the previous part multiplicatively rather
  than additively, to ensure that DP still holds.  Need to think
  carefully about this. }

For the proof of the lemma itself, we follow a similar outline to the
first part of the proof. 
Assume to the contrary that there is an optimal (row monotone)
mechanism $M$ with
  some $(i, j)$ such that
  $\Pr[i|j+1] < \Pr[i|j] < \alpha \Pr[i|j+1]$ (the case when the roles
  of $j$ and $j+1$ are swapped is symmetric).
  We could improve our objective function by decreasing $\Pr[i|j+1]$
  by the quantity $\Delta = \Pr[i|j+1] - \alpha \Pr[i|j]$, and adding
  $\Delta$ to $\Pr[j+1|j+1]$.
  This ensures that the column corresponding to $j$ remains a valid
  probability distribution, i.e. it sums to 1.
  Note that by row monotonicity, we are sure that $i \neq j+1$. 
  Thus we obtain a new mechanism $M'$ with an improved objective
  function, contradicting the assumption that $M$ was optimal.
  Note that the constructed $M'$ itself may have DP inequalities that
  are not tight. 
  However, the argument can be repeated
  until all inequalities become tight.
  Note that this process must terminate: each step we increase the
  objective function, but there is a finite limit on the value this
  can achieve.
  Hence, there is an optimal mechanism with all DP inequalities
  tight. 
\end{proof}

{\todo need to clarify the issue around diagonal elements, }

\begin{lemma}
  For \opt, we have $\Pr[0|0] = \Pr[n|n]$ and
  
  \centerline{$\Pr[i|i]=\Pr[i+1|i+1],\quad \forall 1\leq i\leq n-2$.}
  \label{lemma:samediagonals}
\end{lemma}

\begin{proof}
  We will write $\Pr[0|0]=y_0,\Pr[n|n]=y_n,$ and $\Pr[i|i]=y_i$ 
for the diagonal entries of \opt.
Based on the properties of \opt we have proved so far, we obtain a
matrix which is quite similar to Figure~\ref{fig:gm},
with the exception that we have yet to prove that $y_0 = y_n$ and 
$\forall i,j \in \{1 \ldots n-1\} y_i=y_j=y$
i.e. we initially allow all $x_i$'s and $y_i$'s to be distinct.
Using the fact that all differential privacy inequalities are tight in
$ \mathcal{P}'$ we obtain a matrix as follows:


 If  $\mathcal{P'}$ is a valid mechanism, then all columns sum to $1$.
We use this fact to find the values of the $x_i$s and $y_i$s.
}

From the first row of $A(\alpha)$, corresponding to the first column
of $\mech'$, we have
\begin{equation}
y_0+ y_n\alpha^{n}+\sum_{i=1}^{n-1}y_i\alpha^{i}=1
\label{eq1}
\end{equation}

Similarly, from the second column of $\mech'$,
\begin{align}
  y_0 \alpha+y_n\alpha^{n-1}+\sum_{i=1}^{n-1} y_i\alpha^{i-1}=1 \\
  \text{so }  y_0 \alpha^{2}+y_n\alpha^{n}+\sum_{i=1}^{n-1} y_i\alpha^{i}=\alpha
  \label{eq2}
\end{align}

Then, combining \eqref{eq1} and \eqref{eq2}, we obtain
\[y_0 \alpha^{2}+y_n\alpha^{n}+(1-y_0-y_n\alpha^{n})=\alpha\]

which yields ${y_0=\frac{1}{1+\alpha}}$.

Following the same approach for columns $n$ and $n+1$ of $\mech'$, we similarly obtain 
${y_n=\frac{1}{1+\alpha}}$.

We find each remaining $y_i$ in turn, starting from $y_1$.
Taking the linear combination which subtracts $\alpha$ times column
$i+1$ of $\mech'$ from column $i$ of $\mech'$
eliminates $y_{i+1} \ldots y_{n-1}$.
We then obtain 
\[ y_0 \alpha^{i}(1-\alpha^{2})+(1-\alpha^{2})\sum_{j=1}^{i}y_j\alpha^{i-j}
=1-\alpha.\]

Substituting the found value of $y_0$, we obtain
\begin{align*}
\frac{\alpha^{i}}{1+\alpha}+\sum_{j=1}^{i}y_j\alpha^{i-j}&=\frac{1}{1+\alpha}\\     \sum_{j=1}^{i}y_j\alpha^{i-j}&=\frac{1-\alpha^{i}}{1+\alpha}.
\end{align*}

The base case $i=1$ yields $y_1 = \frac{1 - \alpha}{1+\alpha}$. 
Then, inductively, $y_i = \frac{1-\alpha}{1+\alpha}$.
Assuming the inductive hypothesis, we have
\[\sum_{j=1}^{i-1}\frac{(1-\alpha)\alpha^{j`}}{1+\alpha}+y_i=\frac{1-\alpha^{i}}{1+\alpha}.\]
\[ \text{Simplifying, } \frac{1-\alpha}{1+\alpha}
\sum_{j=0}^{i-1}\alpha^{j}- \frac{1-\alpha}{1+\alpha}+y_i
=\frac{1-\alpha^{i}}{1+\alpha}.\]
Using the standard expression for the sum of a geometric progression,
the summation term becomes $\frac{1-\alpha^i}{1-\alpha}$.
Substituting this and cancelling, we find
$y_i=\frac{1-\alpha}{1+\alpha}$.

To complete the proof, we observe that the resulting mechanism $\mech
= \mech'$ defined by the diagonal
\[ y = \left[\frac{1}{1+\alpha}, \frac{1-\alpha}{1+\alpha},
\frac{1-\alpha}{1+\alpha}
\ldots , \frac{1-\alpha}{1+\alpha}, \frac{1}{1+\alpha}\right] \]
is exactly \gm, by comparison to Figure~\ref{fig:gm}. 
Hence, the optimal mechanism \opt has a unique form, which is \gm.
\end{proof}


\begin{proof}[Proof of Theorem~\ref{thm:fair}]
That \E is fair follows by definition: for $\Pr[i|i]$, the definition
gives $y \alpha^0 = y$ in all cases. 
Next, we argue that all column sums are 1, i.e. \E is a valid
mechanism.
Consider some column $j \le n$.

Observe that fixing $j$ determines which of $j$ and $n-j$ is smaller.
Assume that it is $j$, i.e. $j \leq n/2$ (the other case is
symmetric), and assume $n$ is even.
Then we have
\begin{align*}
  \sum_{i=0}^{n} \Pr[i|j] = &
  \sum_{|i-j| < j} y\alpha^{|i-j|}
  + \sum_{|i-j| \ge j} y\alpha^{\lceil \frac12 (|i-j| + j)\rceil}
  \\
  = & y + \sum_{i=1}^{j} 2y\alpha^{i} + \sum_{i=j}^{n} y\alpha^{\lceil
    \frac12 i \rceil}
   =  y + \sum_{i=1}^{n/2} 2y\alpha^{i}
\end{align*}

This sums to 1 given our choice of $y$.
For $n$ odd, the calculation is the same except there is one additional
term of $y\alpha^{\lceil n/2 \rceil}$ in the final sum (and we choose
$y$ to ensure that this sum is 1).

The mechanism meets our definition of symmetry \eqref{eq:symmetry},
since according to \eqref{eq:fairmechanism}, $\Pr[n-i|n-j]$ is given by
\begin{align*}
  y\alpha^{|(n-i)-(n-j)|} \text{ if } |(n-i)-(n-j)| < \min(n-j, n-(n-j)) \\
  y\alpha^{\lceil \frac{|(n-i)-(n-j)| + \min(n-j, n-(n-j))}{2}\rceil}
  \text{ otherwise}
\end{align*}

\noindent
Simplifying this expression, we observe that it is identical to
\eqref{eq:fairmechanism}.



\para{Column Properties.}
Consider a fixed column $j$ of the mechanism.
As we look at neighboring entries $i$ and $i+1$, we have
four cases:

\smallskip
\noindent
Case (1): $|i-j| < \min(j, n-j)$ and $|i+1-j| < \min(j, n-j)$.\\
Then $\Pr[i|j] = y\alpha^{|i-j|}$ and $\Pr[i+1|j] =
y\alpha^{|i+1-j|}$, so
the probability either increases by a factor of $\alpha$ (when $j<i$)
or increases by a factor of $\alpha$ (when $j \ge i$).

\smallskip
\noindent
Case (2): $|i-j| \ge \min(j, n-j)$ and \\ $|i+1-j| \ge \min(j,n-j)$.\\
Then $\Pr[i|j] = y\alpha^{\lceil\frac{|i-j| + \min(j,
n-j)}{2}\rceil}$, 
while 
$\Pr[i+1|j] =
y\alpha^{\lceil\frac{|i+1-j| + \min(j,n-j)}{2}\rceil}.$

Depending on the parity of $i$, the latter probability can only stay
the same; increase by a factor of $\alpha$ (only when $i>j$);
or decrease by a factor of $\alpha$ (only when $j>i$).

\smallskip
\noindent
Case (3):
$|i-j| \ge \min(j, n-j)$ but $|i+1-j| < \min(j, n-j)$.\\
Then we must have $i < j$ for both conditions to hold.
So we must have (combining the two conditions)
\[j - i \ge \min(j,n - j) > j - (i+1) \]
We have
$\Pr[i+1|j] = y \alpha^{j-i-1} $
and
\[\Pr[i|j] = y \alpha^{\lceil \frac{(j - i) + \min(j,n-j)}{2}\rceil}
  \ge y \alpha^{\lceil \frac{2(j - i)}{2} \rceil}
    = y \alpha^{j- i}\]
    Similarly, we can show $\Pr[i|j] < y \alpha^{j-i}$.

    Hence we have
$    \alpha \Pr[i|j] \le \Pr[i+1|j] \le  \Pr[i|j].$

\smallskip
\noindent
Case (4):
$|i-j| < \min(j, n-j)$ but $|i + 1 - j| \ge \min(j, n-j)$. \\
Then we have $j<i$ and
\[ i-j < \min(j, n-j) \le i - j+1 \]
We have
$\Pr[i|j] = y \alpha^{i-j}$ and
\[ \Pr[i+1|j] = y\alpha^{\lceil\frac12 ((i+1 -j) + \min(j,
  n-j))\rceil}
\ge y \alpha^{\lceil \frac{2(i+1-j)}{2}\rceil} = y\alpha^{i+1-j}
\]
Similarly, we can show $\Pr[i|j] \leq y\alpha^{i-j}$. 

Hence $\alpha\Pr[i|j] \leq \Pr[i+1|j] \leq \Pr[i|j]$

\smallskip
\noindent
    {\em Summary of Column Properties.}
 When $i<j$, the column-wise adjacent
 probabilities are either the same or increase by a factor of
 $1/\alpha$ as $i$ increases; and when $i\ge j$, then adjacent
 probabilities either decrease by a factor of $1/\alpha$ or stay the
 same.
 From these, we can conclude that \E has column monotonicity (and
 hence is column honest).

\para{Row properties.}
The analysis for the row properties (DP, and row monotone)
follows the pattern set by the
column properties, based on a case analysis.
%
Consider a fixed row $i$ of the mechanism.  As we look at neighboring
entries $j$ and $j+1$ we have four cases:

\smallskip
\noindent
Case (1): $|i-j| < \min(j, n-j)$ and $|i-(j+1)| < \min(j, n-j)$.\\
Then $\Pr[i|j] = y\alpha^{|i-j|}$ and $\Pr[i|j+1] =
y\alpha^{|i-(j+1)|}$, so
the probability either increases by a factor of $1/\alpha$ (when $j<i$)
or decreases by a factor of $1/\alpha$ (when $j \ge i$).

\smallskip
\noindent
Case (2): $|i-j| \ge \min(j, n-j)$ and \\ $|i-(j+1)| \ge \min(j+1,n-j+1)$.\\
Then \[\Pr[i|j] = y\alpha^{\lceil\frac{|i-j| + \min(j, n-j)}{2}\rceil}\]
while 
\[\Pr[i|j+1] =
y\alpha^{\lceil\frac{|i-(j+1)| + \min(j+1,n-(j+1))}{2}\rceil}.\]
The subcases here are \\
(a) when $j \le n/2$ and $j<i$.
Then
\[\Pr[i|j] = y\alpha^{\lceil\frac{i-j + j}{2}\rceil} =
y\alpha^{\lceil i/2\rceil} = \Pr[i|j+1] = y\alpha^{\lceil\frac{i-j-1 +
    j+1}{2}\rceil},\]
i.e. the probability is unchanged.\\
\smallskip
(b) when $j > n/2$ and $j>i$, then similarly
\begin{align*}
  \Pr[i|j] & = y\alpha^{\lceil \frac12 (j - i + n - j) \rceil} \\
           & = y\alpha^{\lceil \frac12(j + 1 - i + n - j - 1) \rceil} \\
           & = \Pr[i|j+1] 
\end{align*}
Note that other potential cases, e.g. $j<i$ and $j \ge n/2$ are ruled
out by the condition $|i-j| \ge \min(j, n-j)$. 

\smallskip
\noindent
Case (3): $|i-j| < \min(j, n-j)$ but $|i-(j+1)| \ge \min(j+1, n-j-1)$.
Working through the subcases eliminates most options:
if $j < i$ we can derive $2(j+1) \le i \le 2j$, a contradiction.
This leaves $j \ge i$, which leads us to
\begin{align*}
    j - i < n -j \\
    j+1-i \ge n - j -1
\end{align*}
Note that it must be that $|i-(j+1)| \ge n-j$,
as the other possibility leads to $i > 2(j+1)$, contradicting $j \ge
i$. 
Combining these two, we obtain $j < \frac{n+i}{2} \le j+1$.
Subtracting $i$ from both sides, and applying the $\lceil \cdot
\rceil$ operator, we obtain
\[\lceil j - i \rceil \le \lceil \frac{n-i}{2}\rceil \le \lceil j-i + 1\rceil
\]
Since $j$ and $i$ are both integral, we conclude
\begin{equation} j - i  \le \lceil \frac{n-i}{2}\rceil \le  (j-i) + 1
  \label{eq:case3}
  \end{equation}
Then we have $\Pr[i|j] = y \alpha^{j-i}$, while
\[\Pr[i|j+1] = y \alpha^{\lceil \frac12((j + 1 - i) + n - (j+1))\rceil}
= y\alpha^{\lceil \frac{n-i}{2}\rceil}.\]
From \eqref{eq:case3},
we conclude that in this case
\[\alpha\Pr[i|j] \leq \Pr[i|j+1] \leq 
\Pr[i|j].\] 

\smallskip
\noindent
Case (4): $|i-j| \ge \min(j,n-j)$ but \\ $|i-(j+1)| < \min(j+1, n-j+1)$.\\
This case starts similarly to the previous case.
We cannot have $i < j$ as this leads to a contradiction, so we must
have $i \ge j$, and $j < n-j$.
Then we deduce
\begin{align*}
  i - j \ge j \\
  i - j - 1 < j+1
\end{align*}
These permit only two possibilities: $i=2j$ or $i=2j+1$.
In the first of these,
we obtain
\begin{align*} & \Pr[i|j+1] = y\alpha^{2j - (j+1)} = y\alpha^{j-1} \\
  \text{ and } & 
  \Pr[i|j] = y \alpha^{\lceil\frac12(j+j)\rceil} = y\alpha^{j}.
  \end{align*}
Else, we obtain
\begin{align*} & \Pr[i|j+1] = y \alpha^{2j+1 - (j+1)} = y\alpha^{j} \\
\text{ and } & 
\Pr[i|j] = y\alpha^{\lceil\frac12(j+1+j)\rceil} = y\alpha^{j+1}
\end{align*}  
In both cases, we have
$\Pr[i|j] = \alpha\Pr[i|j+1]$.

\smallskip
\noindent
    {\em Summary of Row Properties.}
    From the cases analyzed above, we see that when
    $i< j$, then adjacent probabilities are either the same or
    there is an increase by a factor of $1/\alpha$ as $j$ increases; and when
    $i>j$, then adjacent probability either decreases by a factor of
    $1/\alpha$ as $j$ increases, or stays the same.
    From these, we can conclude that \E meets differential privacy,
    and is row monotone.

These collectively cover all defined properties (due to implications
discussed in Section~\ref{sec:properties}, e.g. row monotonicity
implies row-wise honesty). 
\end{proof}
}


\end{document}